\tikzstyle{vertex}=[circle, draw, inner sep=0pt, minimum size=4pt]
\newcommand{\vertex}{\node[vertex]}
    \def\ps@pprintTitle{%
       \let\@oddhead\@empty
       \let\@evenhead\@empty
       \def\@oddfoot{\reset@font\hfil\thepage\hfil}
       \let\@evenfoot\@oddfoot
    }
\newcommand{\bes}{\begin{eqnarray*}}
	\newcommand{\ees}{\end{eqnarray*}}
\newcommand{\bpm}{\begin{pmatrix}}
	\newcommand{\epm}{\end{pmatrix}}
\def\diag{{\rm diag}\,}
\newtheorem{lemma}{Lemma}
\newtheorem{theorem}{Theorem}
\newtheorem{corollary}{Corollary}
\newtheorem{remark}{Remark}
\newtheorem{proposition}{Proposition}
\newtheorem{example}{Example}
\begin{document}

\begin{frontmatter}



\title{Perfect quantum state transfer using Hadamard diagonalizable graphs}

\author[label1,label2]{Nathaniel Johnston}
\address[label1]{Department of Mathematics \& Computer Science, Mount Allison University, Sackville, NB, Canada E4L 1E4}
\address[label2]{Department of Mathematics \& Statistics, University of Guelph, Guelph, ON, Canada N1G 2W1}

\author[label3]{Steve Kirkland}
\address[label3]{Department of Mathematics, University of Manitoba, Winnipeg, MB, Canada  R3T 2N2}

\author[label2,label3,label4]{Sarah Plosker}
\ead{ploskers@brandonu.ca}
\address[label4]{Department of Mathematics \& Computer Science, Brandon University, Brandon,  MB, Canada R7A 6A9}

\author[label4]{Rebecca Storey}

\author[label3]{Xiaohong Zhang}

\begin{abstract}  Quantum state transfer within a quantum computer can be achieved by
using a network of qubits, and such a network can be modelled
mathematically by a graph. Here, we focus on the corresponding Laplacian
matrix, and those graphs for which the Laplacian can be diagonalized by
a Hadamard matrix. We give a simple eigenvalue characterization for when
such a graph has perfect state transfer at time $\pi /2$; this
characterization allows one to choose the correct eigenvalues to build
graphs having perfect state transfer. We characterize the graphs that
are diagonalizable by the standard Hadamard matrix, showing a direct
relationship to cubelike graphs. We then give a number of constructions
producing a wide variety of new graphs that exhibit perfect state
transfer, and we consider several corollaries in the settings of both
weighted and unweighted graphs, as well as how our results relate to the
notion of pretty good state transfer. Finally, we give an optimality
result, showing that among regular graphs of degree at most $4$, the
hypercube is the sparsest Hadamard diagonalizable connected unweighted
graph with perfect state transfer.
\end{abstract}

\begin{keyword}
Laplacian matrix \sep Hadamard diagonalizable graph  \sep quantum state transfer \sep cubelike graphs \sep double cover  \sep perfect state transfer
\MSC[2010] 05C50 \sep 05C76 \sep 15A18\sep 81P45

\end{keyword}

\end{frontmatter}




\section{Introduction}

Accurate transmission of quantum states between processors and/or
registers of a quantum computer is critical for short distance
communication in a physical quantum computing scheme. Bose
\cite{Bose} first proposed the use of spin chains to accomplish this
task over a decade ago. Since then, much work has been done on
\emph{perfect state transfer (PST)}, which accomplishes this task
perfectly in the sense that the state read out by the receiver at some
time $t_{0}$ is, with probability equal to one, identical up to complex
modulus to the input state of the sender at time $t=0$.

Many families of graphs have been found to exhibit PST, including the
join of a weighted two-vertex graph with any regular graph
\cite{Ang09}, Hamming graphs \cite{Ang09} (see also
\cite{BGS08,CDDEKL,CDEL}), a family of double-cone non-periodic graphs
\cite{Ang10}, and a family of integral circulant graphs
\cite{B13} (see also \cite{Ang10}). It is easy to see that the
Cartesian product of two graphs having PST at the same time also has PST
\cite[Sec.~3.3]{AlEtAl}. Much work has also been done with respect
to analyzing the sensitivity
\cite{Andloc,numerical1,GKLPZ,Kay06,Steve2015,numerical2,numerical3},
or even correcting errors \cite{Kay15}, of quantum spin systems.
Signed graphs and graphs with arbitrary edge weights have also been
considered (see \cite{signed} and the references therein), due to
the intriguing fact that certain graphs that do not exhibit PST when
unsigned/unweighted can exhibit PST when signed or weighted properly.
Three articles particularly relevant to the work herein are
\cite{BGS08,cubelike}, which characterized perfect state transfer in
cubelike graphs, a family of graphs that are Hadamard diagonalizable,
and \cite{doublecover}, which shows that perfect state transfer
occurs in graphs constructed in a manner similar to our merge operation,
called the ``$\ltimes $'' operation.

The general approach taken in the literature is to model a quantum spin
system with an undirected connected graph, where the dynamics of the
system are governed by the Hamiltonian of the system: for $XX$ dynamics
the Hamiltonian is the adjacency matrix corresponding to the graph, and
for Heisenberg ($XXX$) dynamics the Hamiltonian is the Laplacian matrix
corresponding to the graph. In the case of $XXX$ dynamics (on which we
focus exclusively in this paper), there is more structure to work with
since we know that the smallest eigenvalue of a Laplacian matrix is
zero, with corresponding eigenvector $\mathbf{1}$ (the all-ones vector).
We note in passing that in the case of regular graphs (which we deal
with frequently in this paper), presence or absence of PST is identical
under $XX$ dynamics and $XXX$ dynamics.

Our contribution to the theory of perfect state transfer is to
characterize graphs that are diagonalizable by the standard Hadamard
matrix, connecting this property with the notion of cubelike graphs, and
to detail procedures for creating new graphs with PST.

Our focus on graphs having a Hadamard diagonalizable Laplacian is not
as restrictive as it might seem at first glance; Hadamard matrices are
ubiquitous in quantum information theory, and because of the special
structure of Hadamard matrices the corresponding graphs tend to exhibit
a good deal of symmetry. As a result, many of the known graphs with PST
are actually Hadamard diagonalizable, such as the hypercube.
Furthermore, integer-weighted graphs with Hadamard diagonalizable
Laplacian are convenient to work with in our setting because they are
known to be regular, with spectra consisting of even integers (see
\cite{ShaunSteve} and Theorem~\ref{thm:SS} below); consequently the
corresponding graph often exhibits PST between two of its vertices at
time $t_{0}=\pi /2$ (see Theorem~\ref{thm:eigenPST} for a more specific
statement).

In Section~\ref{graph}, we give a quick review of the graph theory and
quantum state transfer definitions and tools that we will use. In
Section~\ref{pst_basic_results}, we give an eigenvalue characterization
connecting a graph being Hadamard diagonalizable and it having PST at
time $\pi /2$ between two of its vertices. We further give a connection
between diagonalizability by the standard Hadamard matrix and cubelike
graphs, completely characterizing such graphs. In Section~\ref{pst}, we
describe several ways to construct new Hadamard diagonalizable graphs
from old ones, including our ``merge'' operation, a weighted variant of
the ``$\ltimes $'' operation, which takes two Hadamard diagonalizable
graphs as input, and produces a new (larger) Hadamard diagonalizable
graph with PST as output under a wide variety of conditions. We also
present several results demonstrating the usefulness of this operation
and the types of graphs with PST that it can produce. In
Section~\ref{sec:noninteger}, we discuss how our results generalize to
graphs with non-integer edge weights, which involves the notion of
pretty good state transfer (PGST), and we close in Section~\ref{optim}
with some results concerning the optimality in terms of timing errors
and manufacturing errors of Hadamard diagonalizable graphs.

\section{Preliminaries}\label{graph}

\subsection{Graph Theory Basics}\label{sec:graph_theory_basics}
For a weighted undirected graph $G$ on $n$ vertices, its corresponding
$n\times n$ adjacency matrix $A=(a_{jk})$ is defined by
\begin{eqnarray*}
a_{jk}=
\begin{cases}
w_{j,k}
& \quad \text{if } j \text{ and } k\text{ are adjacent}
\\
0
& \quad \text{otherwise,}
\end{cases}
\end{eqnarray*}
where $w_{j,k}$ is the \emph{weight} of the edge between vertices $j$ and
$k$. Its corresponding $n\times n$ Laplacian matrix is defined by
$L=D-A$, where $D$ is the diagonal matrix of row sums of $A$, known as
the \emph{degree matrix} associated to $G$. Often $w_{j,k}$ in the above
is taken to be 1 for all adjacent $j,k$, in which case the graph is said
to be \emph{unweighted}. A \emph{signed graph} is a graph for which the
non-zero weights can be either $\pm 1$. A \emph{weighted graph} is a
graph for which there is no restriction on $w_{j,k}$ (although the
weights are typically taken to be in $\mathbb{R}$, as they are in this
paper).

An unweighted graph $G$ is \emph{regular} if each of its vertices has
the same number of neighbours, or, more specifically,
\emph{$k$-regular} if each of its vertices has exactly $k$ adjacent
neighbours. The weighted analogue of a regular graph is a graph where
the sum of all the weights of edges incident with a particular vertex
is the same for all vertices. We will be interested in weighted graphs
with this equal ``weighted degree'' property; for simplicity, we simply
call this the \emph{degree} of the graph. A graph is \emph{connected}
if there is a path (a sequence of edges connecting a sequence of
vertices) between every pair of distinct vertices and \emph{complete}
if there is an edge between every pair of distinct vertices (the complete
graph on $n$ vertices is denoted $K_{n}$).

There are several different operations that can be performed to turn two
graphs into a new (typically larger) graph. Specifically, given graphs
$G_{1}=(V_{1}, E_{1})$ and $G_{2}=(V_{2}, E_{2})$, where $V_{1}$ and $E_1$ are the set of all vertices and the set of all edges, respectively, in the graph $G_1$ (and similarly for $V_2$ and $E_2$), then
\vspace*{-2pt}
\begin{enumerate}
\item[1.] The \emph{union} of $G_{1}$ and $G_{2}$ is the graph $G_{1}+G_{2}=(V
_{1}\cup V_{2}, E_{1}\cup E_{2})$;
\item[2.] The \emph{join} of $G_{1}$ and $G_{2}$ is the graph $G_{1}\vee G_{2}=(G
_{1}^{c}+G_{2}^{c})^{c}$ where every vertex of $G_{1}$ is connected to
every vertex of $G_{2}$, and all of the original edges of $G_{1}$ and
$G_{2}$ are retained as well;
\item[3.] The \emph{Cartesian product} of $G_{1}$ and $G_{2}$ is the graph
$G_{1} \square G_{2}=(V_{1}\times V_{2}, E_{3})$ where $V_{1}\times V
_{2}$ is the cartesian product of the two original sets of vertices, and
there is an edge in $G_{1}\square G_{2}$ between vertices $(g_{1}, g
_{2})$ and $(h_{1}, h_{2})$ if and only if either (i) $g_{1}=h_{1}$ and
there is an edge between $g_{2}$ and $h_{2}$ in $G_{2}$, or (ii)
$g_{2}=h_{2}$ and there is an edge between $g_{1}$ and $h_{1}$ in
$G_{1}$.

One can also define the \emph{Cartesian product of weighted graphs}
$G_{1}$ and $G_{2}$ by defining (i) the weight of the edges between
$(g_{1},g_{2})$ and $(g_{1},h_{2})$ in $G_{1}\square G_{2}$ to be the
same as the weight between $g_{2}$ and $h_{2}$ in $G_{2}$, and (ii) the
weight of the edges between $(g_{1},g_{2})$ and $(h_{1},g_{2})$ in
$G_{1}\square G_{2}$ to be the same as the weight between $g_{1}$ and
$h_{1}$ in $G_{1}$; and
\item[4.] If $V_{1}=V_{2}$, let $G_{1}\ltimes G_{2}$ be the graph defined by the\vspace{1.4pt}
adjacency matrix $A(G_{1}\ltimes G_{2})=
\begin{bmatrix}
A(G_{1})&A(G_{2})
\\
A(G_{2})&A(G_{1})
\end{bmatrix}
$,\vspace{1.4pt} where $A(\cdot )$ is the adjacency matrix of the given graph. If the
edge sets of $G_{1}$ and $G_{2}$ are disjoint, then $G_{1}\ltimes G
_{2}$ is a \emph{double cover} of the graph with adjacency matrix
$A(G_{1})+A(G_{2})$.
\vspace*{-2pt}
\end{enumerate}

We recall that a \emph{Hadamard matrix} (or simply, a Hadamard) of order
$n$ is an $n\times n$ matrix $H$ with entries $+1$ and $-1$, such that
$HH^{T}=nI$. Let $H_{1}=
\displaystyle
\begin{bmatrix}
1&1
\\
1&-1
\end{bmatrix}
$, $H_{2}=
\begin{bmatrix}
H_{1}&H_{1}
\\
H_{1}&-H_{1}
\end{bmatrix}
, \dots , H_{n} =
\begin{bmatrix}
H_{n-1} & H_{n-1}
\\
H_{n-1} & -H_{n-1}
\end{bmatrix}
$.\vspace{1.4pt} This construction gives the \emph{standard} Hadamards of order
$2^{n}$. The results herein may be of use in the physical setting
because Hadamards are among the simplest non-trivial gates to implement
in the lab (the standard $n$-qubit Hadamard with a scaling factor of
$1/2^{n/2}$ is frequently used in quantum information theory). From the
definition of a Hadamard matrix, it is clear that any two rows of
$H$ are orthogonal, and any two columns of $H$ are also orthogonal. This
property does not change if we permute rows or columns or if we multiply
some rows or columns by $-1$. This leads to the simple but important
observation that, given a Hadamard matrix, it is always possible to
permute and sign its rows and columns so that all entries of the first
row and all entries of the first column are all 1's. A Hadamard matrix
in this form is said to be \emph{normalized} \cite{ShaunSteve}.
Given a graph $G$ on $n$ vertices with corresponding Laplacian matrix
$L$, if we can write $L=\frac{1}{n}H\Lambda H^{T}$ for some Hadamard
$H$ and diagonal matrix $\Lambda $, then we say that $G$ (or, that
$L$) is \emph{Hadamard diagonalizable}. If $G$ is Hadamard
diagonalizable by some Hadamard $H$, then $G$ is also Hadamard
diagonalizable by a corresponding normalized Hadamard
\cite[Lemma~4]{ShaunSteve}. Thus, there is no loss of generality in
assuming that a Hadamard diagonalizable graph is in fact diagonalized
by a normalized Hadamard matrix. Note that ``normalized'' in this
setting does not imply scaling $H$ to satisfy $\|H\|=1$.

\subsection{Perfect state transfer basics}\label{sec:pst_basics}
A graph exhibits \emph{perfect state transfer (PST)} at time
$t_{0}$ if $p(t_{0}) := |\mathbf{e}_{j}^{T} e^{it_{0}\mathcal{H}}
\mathbf{e}_{k} |^{2}=1$ for some vertices $j\neq k$ and some time
$t_{0}>0$, where $\mathcal{H}$ is the Hamiltonian of the system (either
the adjacency matrix $A$ or the Laplacian matrix $L$, depending on the
system's dynamics). In other words, the graph has perfect state transfer
if and only if $e^{it_{0}\mathcal{H}} \mathbf{e}_{k}$ is a scalar
multiple of $\mathbf{e}_{j}$ (or, equivalently, if $e^{it_{0}
\mathcal{H}} \mathbf{e}_{j}$ is a scalar multiple of $\mathbf{e}_{k}$).
Typically we say that a graph has PST from vertex $j$ to vertex $k$ if
it exhibits PST for some vertices $j$ and $k$ and $j<k$.

A slightly weaker property is that of \emph{pretty good state transfer
(PGST)}: a graph exhibits PGST (for some vertices $j\neq k$) if for
every $\varepsilon >0$, there exists a time $t_{\varepsilon }$ such that
$p(t_{\varepsilon }):= |\mathbf{e}_{j}^{T} e^{it_{\varepsilon }
\mathcal{H}} \mathbf{e}_{k} |^{2} \geq 1-\varepsilon $.

The following observation is well-known.

\begin{remark}\label{rm:gcd}For a general integer-weighted graph $G$, assume that $a$ is the
greatest common divisor of all the edge weights of $G$ and that $L$ is
the Laplacian matrix of $G$. Let $G'$ denote the integer-weighted graph
with Laplacian $1/a\,L$. Since $e^{itL} = e^{ita(\frac{1}{a}L)}$ for all
$t$, we find that $G$ has PST at $\pi /(2a)$ if and only if $G'$ has PST
at $\pi /2$. This allows us to identify more graphs having PST: for
example, if $G$ has PST at $\pi /2$, and we are given the graph with
Laplacian matrix $2L$, we know that it has PST at $\pi /4$.
\end{remark}

\subsection{Cubelike graphs}\label{sec:cubelike}
A large family of graphs, of which the hypercube is a member, is the
family of cubelike graphs \cite{BGS08,cubelike}: Take a set
$C\subset \mathbb{Z}_{2}^{d}=\mathbb{Z}_{2}\times \cdots \times
\mathbb{Z}_{2}$ ($d$ times), where $C$ does not contain the all-zeros
vector. Construct the \emph{cubelike graph} $G(C)$ with vertex set
$V=\mathbb{Z}_{2}^{d}$ and two elements of $V$ are adjacent if and only
if their difference is in $C$. The set $C$ is called the
\emph{connection set} of the graph $G(C)$. The following result
characterizes PST at $\pi /2$ for cubelike graphs.

\begin{theorem}
\label{cubelikePST}\cite[Theorem~1]{BGS08}, \cite[Theorem~2.3]{cubelike} Let
$C$ be a subset of $\mathbb{Z}_{2}^{d}$ and let $\sigma $ be the sum of
the elements of $C$. If $\sigma \neq 0$, then PST occurs in
$G(C)$ from $j$ to $j+\sigma $ at time $\pi /2$. If $\sigma =0$, then
$G(C)$ is \emph{periodic} with period $\pi /2$ (every vertex has perfect
state transfer with itself at time $t_{0}=\pi /2$).
\end{theorem}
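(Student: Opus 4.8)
The plan is to exploit the fact that $G(C)$ is a Cayley graph on the elementary abelian group $\mathbb{Z}_2^d$, so that its adjacency matrix $A$ is diagonalized by the character table of $\mathbb{Z}_2^d$. Indexing both vertices and characters by elements of $\mathbb{Z}_2^d$, this character table is precisely the (unscaled) standard Hadamard matrix $H$ of order $2^d$, with $H_{x,a}=(-1)^{a\cdot x}$: for each $a\in\mathbb{Z}_2^d$ the vector $\big((-1)^{a\cdot x}\big)_{x\in\mathbb{Z}_2^d}$ is an eigenvector of $A$ with eigenvalue $\lambda_a=\sum_{c\in C}(-1)^{a\cdot c}$, so that $A=\tfrac{1}{2^d}H\Lambda H^{T}$ with $\Lambda=\mathrm{diag}(\lambda_a)_{a}$. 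Since $G(C)$ is $|C|$-regular we have $L=|C|I-A$ and hence $e^{itL}=e^{it|C|}e^{-itA}$; thus it suffices to analyse $e^{-itA}$, and the conclusion for the Laplacian dynamics follows by multiplying through by the global phase $e^{it|C|}$, which does not affect $|p(t_0)|$.

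The next step is to write the transition amplitude via the spectral decomposition,
\begin{equation*}
\big(e^{-itA}\big)_{j,k}=\frac{1}{2^d}\sum_{a\in\mathbb{Z}_2^d}(-1)^{a\cdot(j+k)}\,e^{-it\lambda_a},
\end{equation*}
and then set $t=\pi/2$, which turns the exponential into $(-i)^{\lambda_a}$. Since each $a\cdot c$ equals $0$ or $1$, writing $k_a=\big|\{c\in C:a\cdot c=1\}\big|$ gives $\lambda_a=|C|-2k_a$, hence $(-i)^{\lambda_a}=(-i)^{|C|}(-1)^{k_a}$. The crucial observation is
\begin{equation*}
(-1)^{k_a}=\prod_{c\in C}(-1)^{a\cdot c}=(-1)^{\sum_{c\in C}a\cdot c}=(-1)^{a\cdot\sigma},
\end{equation*}
because $\sum_{c\in C}a\cdot c\equiv a\cdot\sigma\pmod 2$, where $\sigma=\sum_{c\in C}c\in\mathbb{Z}_2^d$.

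Substituting back yields
\begin{equation*}
\big(e^{-i(\pi/2)A}\big)_{j,k}=\frac{(-i)^{|C|}}{2^d}\sum_{a\in\mathbb{Z}_2^d}(-1)^{a\cdot(j+k+\sigma)},
\end{equation*}
and by orthogonality of the characters of $\mathbb{Z}_2^d$ the inner sum equals $2^d$ when $j+k+\sigma=0$ in $\mathbb{Z}_2^d$ (equivalently $k=j+\sigma$) and $0$ otherwise. If $\sigma\neq 0$, then the only nonzero entry in row $j$ of $e^{-i(\pi/2)A}$ is in column $j+\sigma$, where it equals $(-i)^{|C|}$, which has modulus $1$; this is exactly PST from $j$ to $j+\sigma$ at time $\pi/2$ (for the Laplacian the phase in fact becomes $e^{i(\pi/2)|C|}(-i)^{|C|}=1$). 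If $\sigma=0$, then $e^{-i(\pi/2)A}=(-i)^{|C|}I$ is a scalar multiple of the identity, so every vertex returns to itself up to a phase at time $\pi/2$; that is, $G(C)$ is periodic with period $\pi/2$.

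The whole argument is a Fourier computation on $\mathbb{Z}_2^d$, so I do not expect a genuine obstacle; the points that require care are (i) confirming that the standard Hadamard matrix of order $2^d$ really coincides with the $\mathbb{Z}_2^d$ character table under a consistent indexing of vertices and characters, (ii) bookkeeping the global phase when passing between $A$ and $L=|C|I-A$ (harmless, since it does not affect $|p(t_0)|$), and (iii) the parity identity $(-1)^{k_a}=(-1)^{a\cdot\sigma}$, which is immediate from $\mathbb{Z}_2$-linearity of the map $a\mapsto a\cdot c$.
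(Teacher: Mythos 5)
Your proof is correct. The paper states this theorem as a quoted result, citing \cite{BGS08} and \cite{cubelike}, and gives no proof of its own; your character-sum computation on $\mathbb{Z}_2^d$ is essentially the standard argument from those references. All the key steps check out: the eigenvalue formula $\lambda_a=|C|-2k_a$, the parity identity $(-1)^{k_a}=(-1)^{a\cdot\sigma}$, the orthogonality collapse of $\sum_a(-1)^{a\cdot(j+k+\sigma)}$, and the cancellation of the global phase $e^{i(\pi/2)|C|}(-i)^{|C|}=1$ when passing from $A$ to $L=|C|I-A$.
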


The \emph{code} of $G(C)$ is the row space of the $d\times |C|$ matrix
$M$ constructed by taking the elements of $C$ as its columns. When the
sum of the elements of $C$ is zero, it has been shown
\cite{cubelike} that if perfect state transfer occurs on a cubelike
graph, then it must take place at time $\pi /2D$, where $D$ is the
greatest common divisor of the (Hamming) weights of the binary strings
in the code.

\section{Hadamard diagonalizable graphs with PST}\label{pst_basic_results}
The following theorem originally appeared in \cite{ShaunSteve},
restricted to the case of unweighted graphs. The version below allows
for arbitrary integer edge weights. Although its proof is almost
identical to its unweighted version, we include it here for
completeness.

\begin{theorem}
\label{thm:SS}\cite[Theorem~5]{ShaunSteve} If $G$ is an integer-weighted graph that is Hadamard diagonalizable,
then $G$ is regular and all the eigenvalues of its Laplacian are even
integers.
\end{theorem}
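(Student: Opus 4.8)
The plan is to fix, as permitted by \cite[Lemma~4]{ShaunSteve}, a normalized Hadamard matrix $H$ with $L=\tfrac1n H\Lambda H^{T}$ and $\Lambda=\mathrm{diag}(\lambda_{1},\dots,\lambda_{n})$, so that the $m$-th column $h^{(m)}$ of $H$ is an eigenvector of $L$ for $\lambda_{m}$ and $h^{(1)}=\mathbf{1}$. Two preliminary facts come almost for free. First, the $j$-th diagonal entry of $L$ equals $\tfrac1n\sum_{m}\lambda_{m}(H_{jm})^{2}=\tfrac1n\sum_{m}\lambda_{m}$, independent of $j$; since the diagonal of a Laplacian records the (weighted) degrees, $G$ is regular. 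Second, the columns of a normalized $H$ other than the first sum to zero, so $H^{T}\mathbf{1}=n\,\mathbf{e}_{1}$ and hence $L\mathbf{1}=\lambda_{1}\mathbf{1}$; since $L\mathbf{1}=0$ this forces $\lambda_{1}=0$.

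For the eigenvalues themselves, the key observation is that integrality of $L$ already confines the eigenvalue vector $\lambda=(\lambda_{1},\dots,\lambda_{n})^{T}$ to a lattice read off from $H$. Looking only at the first row of $L$, one has $L_{1k}=\tfrac1n (H\lambda)_{k}\in\mathbb{Z}$ for every $k$, so $H\lambda\in n\mathbb{Z}^{n}$; since $HH^{T}=nI$ gives $H^{-1}=\tfrac1n H^{T}$, this is equivalent to $\lambda=H^{T}w$ for some $w\in\mathbb{Z}^{n}$, i.e.\ $\lambda_{m}=\langle h^{(m)},w\rangle$ for all $m$. Evaluating at $m=1$ and using $\lambda_{1}=0$ yields $\sum_{k}w_{k}=0$. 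For $m\ge 2$ the column $h^{(m)}$ is a $\pm1$ vector orthogonal to $h^{(1)}=\mathbf{1}$, so it has exactly $n/2$ entries equal to $+1$; writing $S_{m}$ for the positions of those entries and subtracting off $\sum_{k}w_{k}=0$, one obtains $\lambda_{m}=2\sum_{k\in S_{m}}w_{k}$, an even integer. Together with $\lambda_{1}=0$ this shows that every eigenvalue of $L$ is an even integer, and as a bonus it reproves integrality without invoking algebraic integers.

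The only step requiring any care is the passage from ``$L$ has integer entries'' to ``$\lambda\in H^{T}\mathbb{Z}^{n}$'', which rests on $H^{-1}=\tfrac1n H^{T}$ and on $H^{T}$ preserving $\mathbb{Z}^{n}$; once that lattice description is in hand, evenness is forced purely by the balanced $\pm1$ pattern of the non-principal columns of a normalized Hadamard together with the vanishing of the Laplacian's trivial eigenvalue. I do not anticipate any genuine obstacle beyond this bookkeeping, and the argument is visibly insensitive to whether the edge weights are $\{0,1\}$-valued or arbitrary integers, which is why it extends the unweighted statement of \cite{ShaunSteve} verbatim.
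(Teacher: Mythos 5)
Your proof is correct, and it takes a noticeably different (and arguably cleaner) route than the paper's. For regularity, the paper fixes an index $j$, signs the Hadamard so that its $j$-th row becomes $\mathbf{1}^{T}$, and deduces $e_{j}^{T}Le_{j}=\tfrac1n\mathbf{1}^{T}D\mathbf{1}$ from $HS_{j}\mathbf{1}=ne_{j}$; your one-line computation $L_{jj}=\tfrac1n\sum_{m}\lambda_{m}H_{jm}^{2}=\tfrac1n\sum_{m}\lambda_{m}$ reaches the same conclusion with no signing argument at all. For evenness, the paper takes a single nonzero eigenvalue, partitions the vertex set by the sign pattern of its eigenvector, and reads off $\lambda\mathbf{1}=2X_{1}\mathbf{1}$ from the resulting block structure of $L$; you instead observe that integrality of the first row of $L$ forces $\lambda=H^{T}w$ with $w\in\mathbb{Z}^{n}$ (indeed $w$ is just the first row of $L$), and then the balanced $\pm1$ pattern of each non-principal column together with $\sum_{k}w_{k}=0$ gives $\lambda_{m}=2\sum_{k\in S_{m}}w_{k}$. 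Both arguments ultimately exploit the same two facts --- that non-principal columns of a normalized Hadamard have exactly $n/2$ entries of each sign, and that row sums of $L$ vanish --- but your lattice formulation handles all eigenvalues at once rather than one at a time, avoids the permutation-similarity bookkeeping, and makes the integrality of the spectrum explicit as a byproduct. The only thing worth stating a touch more carefully is that $L_{1k}=\tfrac1n(H\lambda)_{k}$ uses $H_{1m}=1$ for all $m$, i.e.\ that the first \emph{row} (not just the first column) of the normalized $H$ is all ones; that is part of the definition of normalized used in the paper, so there is no gap.
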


\begin{proof}
Without loss of generality we assume that the Laplacian matrix for
$G$ is diagonalized by a normalized Hadamard matrix; observe then that
the first column of that Hadamard is the all-ones vector, and that it
corresponds to the eigenvalue $0$. Choose a non-zero eigenvalue
$\lambda $ of the Laplacian matrix $L$ associated to $G$; the
corresponding column of the Hadamard matrix that diagonalizes $L$ is an
eigenvector corresponding to~$\lambda $. One can split the graph $G$
into two subgraphs, $G_{1}$ and $G_{2}$ (with Laplacians $L_{1}$ and~$L_{2}$), corresponding to the $n/2$ entries of 1 and the $n/2$ entries
of {$-1$} of the eigenvector corresponding to $\lambda $. By applying a
permutation similarity if necessary, we find that
\begin{eqnarray*}
\begin{bmatrix}
L_{1}+X_{1}&-R
\\
-R^{T} & L_{2}+X_{2}
\end{bmatrix}
\begin{bmatrix}
\mathbf{1}
\\
-\mathbf{1}
\end{bmatrix}
=\lambda
\begin{bmatrix}
\mathbf{1}
\\
-\mathbf{1}
\end{bmatrix}
,
\end{eqnarray*}
for some matrices $X_{1}, X_{2}$, and $R$. Necessarily $X_{1}, X_{2}$
are diagonal, and note that we have $X_{1}\mathbf{1}=R\mathbf{1}$ and
$ X_{2}\mathbf{1}=R^{T}\mathbf{1}$.

Since $\lambda \mathbf{1}=L_{1}\mathbf{1}+X_{1}\mathbf{1} +R
\mathbf{1} =2X_{1}\mathbf{1}$, and since $G$ is integer-weighted, we
deduce that $\lambda $ is an even integer. Hence each eigenvalue of the
Laplacian is an even integer.

Next we show that $G$ is regular. For concreteness, suppose that $G$ has
$n$ vertices and that $H$ is a normalized Hadamard matrix so that
$LH=HD$ for some diagonal matrix $D$. Fix an index $j$ between $1$ and
$n$, and let $S_{j}$ be the diagonal matrix with diagonal entries
$\pm 1$ such that $e_{j}^{T}HS_{j}= \mathbf{1}^{T}$. Observe that
$LHS_{j}=HS_{j}D$, and that $HS_{j}$ is also a Hadamard matrix. Since
the $j$-th row of $HS_{j}$ is the all-ones vector and the remaining
rows are orthogonal to it, we deduce that $HS_{j}\mathbf{1}=ne_{j}$.
Consequently, $e_{j}^{T}LHS_{j}\mathbf{1}=ne_{j}^{T}Le_{j}$. On the one
hand, we have $e_{j}^{T}LHS_{j}\mathbf{1}= e_{j}^{T} HS_{j} D
\mathbf{1}=\mathbf{1}^{T} D \mathbf{1}$. Thus, for each $j=1, \ldots
, n, e_{j}^{T}Le_{j} = \frac{1}{n}\mathbf{1}^{T} D \mathbf{1}$, so
$G$ is regular, as desired.
\end{proof}

For an integer-weighted graph that is diagonalizable by some Hadamard
matrix, we now give a precise characterization of its eigenvalues when
it exhibits PST at time $t_{0}=\pi /2$. The proof applies a standard
characterization of PST; see \cite{Kay11}, for example.

\begin{theorem}
\label{thm:eigenPST}
Let $G$ be an integer-weighted graph that is Hadamard diagonalizable by
a Hadamard of order $n$. Let $H=(h_{uv})$ be a corresponding normalized
Hadamard. Denote the eigenvalues of the Laplacian matrix $L$
corresponding to $G$ by $\lambda_{1}, \cdots , \lambda_{n}$, so that
$LH\mathbf{e}_{j}=\lambda_{j}H\mathbf{e}_{j}$, $j=1, \ldots , n$. Then
$G$ has PST from vertex $j$ to vertex $k$ at time $t_{0}=\pi /2$ if and
only if for each $\ell =1,\cdots ,n$, $\lambda_{\ell }\equiv 1-h_{j
\ell }h_{k\ell } \mod {4}$.
\end{theorem}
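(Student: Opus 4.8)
The plan is to reduce the statement to the standard spectral criterion for perfect state transfer and then exploit the very rigid structure of a normalized Hadamard matrix. Recall the well-known characterization (see \cite{Kay11}): if $L = \sum_{\ell} \lambda_\ell P_\ell$ is the spectral decomposition with orthogonal projections $P_\ell$, then $G$ has PST from $j$ to $k$ at time $t_0$ if and only if for every $\ell$ one has $\mathbf{e}_k = \pm P_\ell \mathbf{e}_j / \|P_\ell \mathbf{e}_j\|$ appropriately phased, which in the non-degenerate case boils down to: there is a fixed $\gamma$ such that $e^{it_0\lambda_\ell}$ equals $\gamma$ whenever $h_{j\ell}h_{k\ell} = 1$ and equals $-\gamma$ whenever $h_{j\ell}h_{k\ell} = -1$. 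Here the eigenvector for $\lambda_\ell$ is the $\ell$-th column of $H$, with unit-modulus entries $h_{u\ell}/\sqrt n$, so $\mathbf{e}_j$ and $\mathbf{e}_k$ have the same overlap pattern with every eigenvector up to the sign $h_{j\ell}h_{k\ell}$. First I would set $t_0 = \pi/2$ and, using Theorem~\ref{thm:SS}, write each eigenvalue as $\lambda_\ell = 2m_\ell$ for an integer $m_\ell$, so that $e^{i(\pi/2)\lambda_\ell} = e^{i\pi m_\ell} = (-1)^{m_\ell} = i^{\lambda_\ell}$ takes values in $\{+1,-1\}$ only.

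Next I would examine the constant $\gamma$. Since the eigenvalue $0$ always occurs (corresponding to the all-ones first column of the normalized $H$, so $h_{j1} = h_{k1} = 1$ and hence $h_{j1}h_{k1} = 1$), and since $e^{i(\pi/2)\cdot 0} = 1$, the PST condition forces $\gamma = 1$. Therefore the condition becomes: $e^{i(\pi/2)\lambda_\ell} = h_{j\ell}h_{k\ell}$ for every $\ell$, i.e.\ $(-1)^{\lambda_\ell/2} = h_{j\ell}h_{k\ell}$. Writing $h_{j\ell}h_{k\ell} = (-1)^{(1 - h_{j\ell}h_{k\ell})/2}$ (valid since $h_{j\ell}h_{k\ell} \in \{\pm 1\}$), the condition reads $(-1)^{\lambda_\ell/2} = (-1)^{(1-h_{j\ell}h_{k\ell})/2}$, which is equivalent to $\lambda_\ell/2 \equiv (1 - h_{j\ell}h_{k\ell})/2 \pmod 2$, i.e.\ $\lambda_\ell \equiv 1 - h_{j\ell}h_{k\ell} \pmod 4$. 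That is exactly the claimed congruence, and the chain of equivalences runs in both directions, giving the ``if and only if''.

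The one place that needs genuine care, and which I expect to be the main obstacle, is justifying the clean form of the spectral PST criterion in the possibly \emph{degenerate} setting — the Laplacian may well have repeated eigenvalues, so several columns of $H$ share the same $\lambda_\ell$. I would handle this by working directly with $U(t_0) = e^{it_0 L} = \frac{1}{n} H\, \mathrm{diag}(e^{it_0\lambda_1},\dots,e^{it_0\lambda_n})\, H^T$ and computing the $(j,k)$ entry: $U(t_0)_{jk} = \frac{1}{n}\sum_{\ell} h_{j\ell} h_{k\ell} e^{it_0\lambda_\ell}$. PST from $j$ to $k$ at $t_0$ means $|U(t_0)_{jk}| = 1$; since $U(t_0)$ is unitary and each summand has modulus $\tfrac1n$, equality in the triangle inequality forces all $n$ complex numbers $h_{j\ell}h_{k\ell}e^{it_0\lambda_\ell}$ to be equal, say to a common unimodular $\gamma$. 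Evaluating at $\ell$ with $\lambda_\ell = 0$ (where $h_{j\ell}h_{k\ell}=1$) pins $\gamma = 1$, and we recover $h_{j\ell}h_{k\ell}e^{it_0\lambda_\ell} = 1$ for all $\ell$ — which is precisely the condition analyzed above, now with no non-degeneracy assumption needed. Conversely, if the congruences hold then each summand equals $\tfrac1n$, so $|U(\pi/2)_{jk}| = 1$ and PST occurs. This triangle-inequality argument sidesteps the eigenprojection bookkeeping entirely, so after setting it up the remainder is the short modular-arithmetic manipulation already sketched.
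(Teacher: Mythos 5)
Your proposal is correct and follows essentially the same route as the paper: diagonalize $e^{i(\pi/2)L}=\frac{1}{n}He^{i(\pi/2)\Lambda}H^{T}$, use the all-ones first column of the normalized Hadamard (eigenvalue $0$) to pin the overall phase to $1$, and then use the evenness of the eigenvalues (Theorem~\ref{thm:SS}) to turn $e^{i(\pi/2)\lambda_\ell}=h_{j\ell}h_{k\ell}$ into the mod-$4$ congruence. The only cosmetic difference is that you derive the entrywise condition via equality in the triangle inequality applied to $U(\pi/2)_{jk}$, whereas the paper invokes the equivalent ``$e^{it_0L}\mathbf{e}_j$ is a scalar multiple of $\mathbf{e}_k$'' formulation of PST; both yield the same condition and your version is, if anything, slightly more self-contained.
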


\begin{proof}
Let $\Lambda $ be the diagonal matrix of eigenvalues such that
$L=\frac{1}{n}H\Lambda H^{T}$, and hence $e^{i(\pi /2)L}=\frac{1}{n}He
^{i(\pi /2)\Lambda }H^{T}$. By the definition of PST, it follows that
$G$ has PST from vertex $j$ to vertex $k$ at $t_{0}=\pi /2$ if and only
if $e^{i(\pi /2)\Lambda }H^{T}\mathbf{e}_{j}$ is a scalar multiple of
$H^{T}\mathbf{e}_{k}$. Since the first column of $H$ is the all ones
vector $\mathbf{1}$, i.e.\ an eigenvector of $L$ corresponding to the
eigenvalue 0, we know that the first entry of $e^{i(\pi /2)\Lambda }H
^{T}\mathbf{e}_{j}$ is $h_{j1}=1$, and the first entry of $H^{T}
\mathbf{e}_{k}=h_{k1}=1$. Thus we deduce that not only is $e^{i(
\pi /2)\Lambda }H^{T}\mathbf{e}_{j}$ a scalar multiple of $H^{T}
\mathbf{e}_{k}$, but that the multiple must be $1$, i.e., we have PST
from vertex $j$ to $k$ at $\pi /2$ if and only if
%
\begin{eqnarray}
\label{pstrows}
e^{i(\pi /2)\Lambda }H^{T}\mathbf{e}_{j}=H^{T}\mathbf{e}_{k}.
\end{eqnarray}
Note that
\begin{eqnarray*}
e^{i(\pi /2)\lambda_{\ell }}=
\begin{cases}
1
& \quad \text{if } \lambda_{\ell }\equiv 0\mod {4}
\\
-1
& \quad \text{if } \lambda_{\ell }\equiv 2\mod {4}.
\end{cases}
\end{eqnarray*}
Consequently, \eqref{pstrows} holds if and only if, for each
$\ell =1,\cdots , n$, if $h_{j\ell }h_{k\ell }=1$ then $\lambda_{
\ell }\equiv 0\mod {4}$, and if $h_{j\ell }h_{k\ell }=-1$ then
$\lambda_{\ell }\equiv 2\mod {4}$. The conclusion follows.
\end{proof}

It is worth noting that {Theorem~\ref{thm:eigenPST}} already gives an
extremely easy method for creating weighted Hadamard diagonalizable
graphs exhibiting PST, since for any normalized Hadamard matrix $H$ we
can choose the eigenvalues in $\Lambda $ to satisfy the required mod~$4$ equation, and then $L = \frac{1}{n} H\Lambda H^{T}$ will necessarily
be the Laplacian of some rational-weighted graph with PST at time
$t_{0} = \pi /2$ (the graph will be integer-weighted provided $n$
divides each edge weight in this construction).

It is known that the adjacency matrix of any cubelike graph is
diagonalized by the standard Hadamard matrix (see \cite{BGS08}).
The following result provides the converse; in the proof, it will be
convenient to denote the graph (possibly containing loops) with
adjacency matrix $A$ by $\Gamma (A)$.

\begin{lemma}
\label{cube-had}
Suppose that $k\in \mathbb{N}$ and that $A$ is a symmetric $(0, 1)$
matrix that is diagonalizable by the standard Hadamard matrix of order
$2^{k}$. Then
\begin{enumerate}
\item[1.] $A$ has constant diagonal;
\item[2.] if $A$ has zero diagonal then it is the adjacency matrix of a cubelike
graph;
\item[3.] if $A$ has all ones on the diagonal, then $A-I$ is the adjacency matrix
of a cubelike graph.
\end{enumerate}
\end{lemma}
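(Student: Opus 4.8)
The plan is to exploit the standard indexing of the rows and columns of the standard Hadamard matrix $H_{k}$ of order $2^{k}$ by the elements of $\mathbb{Z}_{2}^{k}$: under the usual binary ordering, the recursive construction in Section~\ref{sec:graph_theory_basics} (equivalently, $H_{k}=H_{1}\otimes H_{k-1}$, so $H_{k}=H_{1}^{\otimes k}$) gives $(H_{k})_{u,v}=(-1)^{u\cdot v}$, where $u\cdot v=\sum_{i=1}^{k}u_{i}v_{i}$ is computed modulo $2$. The key step is then the observation that a symmetric matrix $A$ is diagonalizable by $H_{k}$ exactly when its entries are ``translation invariant'' over $\mathbb{Z}_{2}^{k}$, i.e.\ $A_{u,v}$ depends only on $u+v$. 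Indeed, writing $A=\frac{1}{2^{k}}H_{k}\Lambda H_{k}^{T}$ with $\Lambda=\diag\bigl((\lambda_{w})_{w\in\mathbb{Z}_{2}^{k}}\bigr)$, we obtain
\begin{eqnarray*}
A_{u,v}=\frac{1}{2^{k}}\sum_{w\in\mathbb{Z}_{2}^{k}}\lambda_{w}(-1)^{u\cdot w}(-1)^{v\cdot w}=\frac{1}{2^{k}}\sum_{w\in\mathbb{Z}_{2}^{k}}\lambda_{w}(-1)^{(u+v)\cdot w},
\end{eqnarray*}
which depends on $u,v$ only through $u+v$. Hence $A_{u,v}=f(u+v)$ for some $f\colon\mathbb{Z}_{2}^{k}\to\mathbb{R}$, and $f$ takes values in $\{0,1\}$ because $A$ is a $(0,1)$ matrix.

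Granting this, the three parts follow quickly. For part~1, $A_{u,u}=f(u+u)=f(\mathbf{0})$ for every $u$, so $A$ has constant diagonal. For part~2, suppose $f(\mathbf{0})=0$ and put $C:=\{c\in\mathbb{Z}_{2}^{k}:f(c)=1\}$; then $\mathbf{0}\notin C$, so $C$ is a legitimate connection set, and $A_{u,v}=1$ if and only if $u+v\in C$, that is, if and only if $u-v\in C$ (since $-1=1$ in $\mathbb{Z}_{2}$), which is exactly the adjacency rule defining $G(C)$ from Section~\ref{sec:cubelike}; hence $A=A(G(C))$. For part~3, suppose $f(\mathbf{0})=1$; since $I=\frac{1}{2^{k}}H_{k}H_{k}^{T}$ is diagonalizable by $H_{k}$, so is $A-I$, and $A-I$ is again a symmetric $(0,1)$ matrix whose diagonal is now $0$ and whose off-diagonal entries are unchanged, so part~2 applies to $A-I$ and gives $A-I=A(G(C))$ with $C=\{c\in\mathbb{Z}_{2}^{k}\setminus\{\mathbf{0}\}:f(c)=1\}$.

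The only real content lies in the translation-invariance characterization of the first paragraph, and the one point needing care there is to justify the indexing identity $(H_{k})_{u,v}=(-1)^{u\cdot v}$ from the recursive definition --- a routine induction using the Kronecker-product structure. Once that is fixed there is no genuine obstacle: parts~1--3 are then immediate from the definition of the cubelike graph $G(C)$.
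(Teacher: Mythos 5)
Your proof is correct, but it takes a genuinely different route from the paper's. The paper proceeds by induction on $k$: it partitions $A$ conformally with $H_{k+1}=\begin{bmatrix}H_k & H_k\\ H_k & -H_k\end{bmatrix}$ as $\begin{bmatrix}A_1 & X\\ X^T & A_2\end{bmatrix}$, uses the block structure of $H_{k+1}AH_{k+1}$ to deduce $A_1=A_2$ and $X=X^T$ with both $A_1$ and $X$ diagonalized by $H_k$, and then assembles the connection set of $A$ recursively from those of $A_1$ and $X$. You instead unwind the tensor-product structure all at once via the identity $(H_k)_{u,v}=(-1)^{u\cdot v}$, so that $A=\frac{1}{2^k}H_k\Lambda H_k^T$ reads as a Fourier inversion over $\mathbb{Z}_2^k$ and forces $A_{u,v}=f(u+v)$; all three parts then fall out of the definition of $G(C)$. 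Your argument is shorter and more conceptual -- it makes explicit that the matrices diagonalized by the standard Hadamard are precisely the $\mathbb{Z}_2^k$-translation-invariant ones, i.e.\ the (loop-allowing) Cayley graphs of $\mathbb{Z}_2^k$, which is the structural reason the lemma is true -- at the cost of needing the explicit character formula, whose routine induction you correctly flag as the one step to be written out. The paper's induction avoids that formula but re-derives the same structure one tensor factor at a time. One cosmetic remark: you assert the translation-invariance characterization as an ``exactly when,'' but your displayed computation only establishes the forward implication; that is the only direction the lemma needs, so nothing is lost, but you should either prove the converse or weaken the phrasing.
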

\begin{proof}
We proceed by induction on $k$. For $k=1$, it is straightforward to see
that the $(0,1)$ symmetric matrices that are diagonalized by
$H_{1}=
\begin{bmatrix}
1&1
\\
1&-1
\end{bmatrix}
$ are: $
\begin{bmatrix}
0&0
\\
0&0
\end{bmatrix}
$, $
\begin{bmatrix}
1&0
\\
0&1
\end{bmatrix}
$, $
\begin{bmatrix}
0&1
\\
1&0
\end{bmatrix}
$, $
\begin{bmatrix}
1&1
\\
1&1
\end{bmatrix}
$. For these matrices, conclusions (1)--(3) follow readily.

Suppose that the result holds for some $k\in \mathbb{N}$ and that
$A$ is of order $2^{k+1}$. Write the standard Hadamard matrix of order
$2^{k+1}$ as $H_{k+1}=
\begin{bmatrix}
H_{k}&H_{k}
\\
H_{k}&-H_{k}
\end{bmatrix}
$, where $H_{k}$ is the standard Hadamard matrix of order $2^{k}$.
Partition $A$ accordingly as $
\begin{bmatrix}
A_{1}&X
\\
\noalign{\vspace{2pt}}
X^{T}&A_{2}
\end{bmatrix}
$. Then there are diagonal matrices $D_{1},D_{2}$ such that
\begin{equation*}
\begin{bmatrix}
H_{k}&H_{k}
\\
H_{k}&-H_{k}
\end{bmatrix}
\begin{bmatrix}
A_{1}&X
\\ \noalign{\vspace{2pt}}
X^{T}&A_{2}
\end{bmatrix}
\begin{bmatrix}
H_{k}&H_{k}
\\
H_{k}&-H_{k}
\end{bmatrix}
=
\begin{bmatrix}
D_{1}&O
\\
O&D_{2}
\end{bmatrix}
.
\end{equation*}
Hence
$
\begin{bmatrix}
H_{k}(A_{1}+A_{2}+X+X^{T})H_{k}&H_{k}(A_{1}-A_{2}-X+X^{T})H_{k}
\\ \noalign{\vspace{2pt}}
H_{k}(A_{1}-A_{2}+X-X^{T})H_{k}&H_{k}(A_{1}+A_{2}-X-X^{T})H_{k}
\end{bmatrix}
=
\begin{bmatrix}
D_{1}&O
\\
O&D_{2}
\end{bmatrix}
$.\vspace{1.4pt} We deduce that $A_{1}-A_{2}=X-X^{T}$; since $A_{1}-A_{2}$ is
symmetric and $X-X^{T}$ is skew-symmetric, it must be the case that
$A_{1}=A_{2}$ and $X=X^{T}$. Then $H_{k}$ diagonalizes both
$2(A_{1}+X)$ and $2(A_{1}-X)$, and we conclude that $H_{k}$ diagonalizes
$A_{1}$ and diagonalizes~$X$. In particular the induction hypothesis
applies to $A_{1}$ and $X$. Thus $A_{1}$ has constant diagonal, and
hence so does $A$.

Suppose that $A$ has zero diagonal. Applying the induction hypothesis
to $A_{1}$, we find that $\Gamma (A_{1})$ is cubelike. Let $C_{1}$
denote its connection set. Applying the induction hypothesis to $X$,
then either $X$ has zero diagonal and so that $\Gamma (X)$ is a cubelike
graph with connection set $C_{2}$, say, or $\Gamma (X-I)$ is a cubelike
graph with connection set $\tilde{C_{2}}$. Set $C_{2}=\tilde{C_{2}}
\cup \{0\}$.

We label the vertices of the graph $\Gamma (A)$ with vectors in
$\mathbb{Z}_{2}^{k+1}$ in increasing order if considered as binary
numbers. So the first $2^{k}$ rows/columns of $A$ are labelled as
$
\begin{bmatrix}
0
\\
z
\end{bmatrix}
$, where $z\in \mathbb{Z}_{2}^{k}$, and the last $2^{k}$ rows/columns
of $A$ are of\vspace{1.4pt} labelled as $
\begin{bmatrix}
1
\\
z
\end{bmatrix}
$, where $z\in \mathbb{Z}_{2}^{k}$. Now construct the following
connection set:\vspace{1.4pt} $C=\left\{
\begin{bmatrix}
0
\\
x
\end{bmatrix}
, x\in C_{1}\right\} \cup \left\{
\begin{bmatrix}
1
\\
y
\end{bmatrix}
, y\in C_{2}\right\} $. It follows that $A$ is the adjacency matrix of
the cubelike graph with connection set $C$.
\eject

If $A$ has all ones on the diagonal we proceed as above with $A-I$.

This establishes the induction steps for (1)--(3).
\end{proof}

\begin{corollary}
Let $G$ be an unweighted graph with Laplacian matrix $L$. Then $L$ is
diagonalized by the standard Hadamard matrix if and only if $G$ is a
cubelike graph.
\end{corollary}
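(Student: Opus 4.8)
The plan is to deduce the corollary directly from Lemma~\ref{cube-had} together with Theorem~\ref{thm:SS}, the bridge between the two being that for a \emph{regular} graph the degree matrix $D$ is a scalar multiple of the identity, so that the Laplacian $L = D - A$ and the adjacency matrix $A$ share a common set of eigenvectors; in particular one is diagonalized by a standard Hadamard matrix if and only if the other is.

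For the forward implication, I would suppose $L$ is diagonalized by the standard Hadamard matrix of order $n$; necessarily $n = 2^{k}$ for some $k \in \mathbb{N}$. An unweighted graph is in particular integer-weighted, so Theorem~\ref{thm:SS} applies and shows that $G$ is $r$-regular for some $r$. Hence $D = rI$ and $A = rI - L$, which is therefore diagonalized by the same standard Hadamard matrix $H_{k}$. Since $G$ is a simple graph, $A$ is a symmetric $(0,1)$-matrix with zero diagonal, so conclusion~(2) of Lemma~\ref{cube-had} applies and tells us that $A$ is the adjacency matrix of a cubelike graph; that is, $G$ is cubelike.

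For the converse, suppose $G = G(C)$ is a cubelike graph with connection set $C \subseteq \mathbb{Z}_{2}^{d}$, so that $n = 2^{d}$. It is standard (see \cite{BGS08}) that the adjacency matrix of a Cayley graph on $\mathbb{Z}_{2}^{d}$ is diagonalized by the matrix of characters $\big[(-1)^{x\cdot y}\big]_{x,y\in\mathbb{Z}_{2}^{d}}$, and that, under the labelling of vertices by binary strings in increasing order (the same labelling used in the proof of Lemma~\ref{cube-had}), this matrix is exactly the standard Hadamard matrix $H_{d}$. Since $G(C)$ is $|C|$-regular, its degree matrix is $|C|I$, so $L = |C|I - A$ is also diagonalized by $H_{d}$, which completes the argument.

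I do not anticipate a genuine obstacle: the points needing a little care are simply that ``diagonalized by the standard Hadamard matrix'' forces $n$ to be a power of $2$, and that the vertex labelling must be kept consistent with the recursive construction of $H_{d}$ when invoking the Cayley-graph diagonalization. Both are routine bookkeeping once Lemma~\ref{cube-had} and Theorem~\ref{thm:SS} are available.
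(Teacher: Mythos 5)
Your proposal is correct and follows essentially the same route as the paper: use Theorem~\ref{thm:SS} to get regularity, pass between $L$ and $A$ via $A = rI - L$, invoke Lemma~\ref{cube-had} for the forward direction, and the known diagonalization of cubelike (Cayley) adjacency matrices by the standard Hadamard for the converse. The extra detail you supply about the character matrix of $\mathbb{Z}_2^d$ is just an explicit version of the fact the paper cites from \cite{BGS08}.
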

\begin{proof}
If $L$ is diagonalized by the standard Hadamard matrix, then in
particular $G$ is regular by Theorem~\ref{thm:SS}. Hence the adjacency
matrix of $G$ is diagonalized by the standard Hadamard matrix, so by
Lemma~\ref{cube-had}, $G$ is cubelike. Conversely, if $G$ is cubelike,
it is regular and its adjacency matrix is diagonalized by the standard
Hadamard matrix. We now deduce that $L$ is diagonalized by the standard
Hadamard matrix.
\end{proof}

\section{Creation of new Hadamard diagonalizable graphs with PST}\label{pst}
It is known that the union of a PST graph with itself still exhibits
PST. Here, we show that for a graph $G$ on $n\geq 4$ vertices that is
diagonalizable by some Hadamard matrix and that has PST at time
$\pi /2$, both its complement and the join of $G$ with itself are
Hadamard diagonalizable and have PST at time $t_{0} = \pi /2$.

\begin{proposition}
\label{compljoin}
Let $G$ be an integer-weighted graph on $n\geq 4$ vertices that is
diagonalizable by a Hadamard matrix $H$, and that has perfect state
transfer from vertex $j$ to vertex $k$ at time $t_{0} = \pi /2$. Then
its complement $G^{c}$ is also diagonalizable by $H$, and has the same
PST pairs and PST time as $G$. Furthermore, the join $G\vee G$ of
$G$ with itself is diagonalizable by the Hadamard $
\begin{bmatrix}
H & H
\\
H & -H
\end{bmatrix}
$, and has PST from vertex $j$ to vertex $k$ at time $t_{0} = \pi /2$.
\end{proposition}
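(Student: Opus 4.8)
The plan is to handle the two assertions separately, in each case by tracking what the given diagonalization does to the Laplacian and then invoking Theorem~\ref{thm:eigenPST}. For the complement: if $G$ is $d$-regular with Laplacian $L$, then $L(G^c) = (n-1-d)I - (A(G^c)) $ rearranges to $L(G^c) = nI - J - L(G)$, where $J = \mathbf{1}\mathbf{1}^T$. Since we may assume $H$ is normalized, its first column is $\mathbf{1}$, so $H^T J H = \mathrm{diag}(n^2,0,\dots,0)$ and $H^T(nI)H = n^2 I$; combining, $H^T L(G^c) H = n\,\mathrm{diag}(0,n,\dots,n) - H^T L(G) H$ up to the obvious bookkeeping, so $L(G^c)$ is diagonalized by the same $H$. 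The new eigenvalues are $\lambda_1'=0$ and $\lambda_\ell' = n-\lambda_\ell$ for $\ell\geq 2$; since $n$ is a power of $2$ and $n\geq 4$, we have $n\equiv 0\pmod 4$, so $\lambda_\ell' \equiv -\lambda_\ell \equiv \lambda_\ell \pmod 4$ (the last step using that $\lambda_\ell$ is even, by Theorem~\ref{thm:SS}, hence $-\lambda_\ell\equiv\lambda_\ell$ mod $4$ exactly when $\lambda_\ell\equiv 0$ or $2$, which is automatic). Wait — more carefully: $-\lambda\equiv\lambda\pmod 4$ iff $2\lambda\equiv 0\pmod 4$ iff $\lambda$ is even, which holds. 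So the mod-$4$ residues are unchanged, and the condition in Theorem~\ref{thm:eigenPST} for the pair $(j,k)$ is preserved verbatim. Hence $G^c$ has the same PST pairs at $\pi/2$.

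For the join $G\vee G$: its Laplacian has block form $\begin{bmatrix} L+nI & -J \\ -J & L+nI\end{bmatrix}$ where each block is $n\times n$, because joining two copies of a $d$-regular graph on $n$ vertices each adds $n$ to every degree and puts an all-ones block between the two halves. The plan is to verify directly that $\widehat H := \begin{bmatrix} H & H \\ H & -H\end{bmatrix}$ diagonalizes this: using $H^T J H = \mathrm{diag}(n^2,0,\dots,0)$ and $H^T(L+nI)H = \mathrm{diag}(n\lambda_1+n^2,\dots) = \mathrm{diag}(n^2, n\lambda_2+n^2,\dots)$, a $2\times 2$ block computation shows the off-diagonal blocks of $\widehat H^T L(G\vee G)\widehat H$ vanish and the diagonal blocks give eigenvalues: one copy of $\{0\}\cup\{\lambda_\ell + n : \ell\geq 2\}$ (the ``symmetric'' columns) and one copy of $\{2n\}\cup\{\lambda_\ell + n : \ell \geq 2\}$ — wait, I should recompute: the $\mathbf{1}$-direction interacts with the $J$ block, so the eigenvalue pattern there needs care. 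Roughly, the symmetric combinations $\begin{bmatrix}H\mathbf e_\ell \\ H\mathbf e_\ell\end{bmatrix}$ see $L+nI$ minus (for $\ell=1$) the $J$ contribution, and the antisymmetric combinations $\begin{bmatrix}H\mathbf e_\ell \\ -H\mathbf e_\ell\end{bmatrix}$ see $L+nI$ plus the $J$ contribution. I will lay out this $2\times 2$ block diagonalization explicitly.

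The key point for PST is then to check the Theorem~\ref{thm:eigenPST} congruence for the pair $(j,k)$ inside the \emph{first} copy of $G$, relative to $\widehat H$. For indices $\ell$ coming from the first ``block'' of columns of $\widehat H$, the relevant Hadamard entries $\widehat h_{j\ell}\widehat h_{k\ell}$ equal $h_{j\ell}h_{k\ell}$ and the eigenvalue is $\lambda_\ell$ (for $\ell\geq 2$) or $0$; for indices from the second block the entries also equal $h_{j\ell}h_{k\ell}$ (the sign flip is in the \emph{lower} half, not affecting rows $j,k$ which lie in the upper half) while the eigenvalue is $\lambda_\ell + n$ (or $2n$), and since $n\equiv 0\pmod 4$ this is $\equiv\lambda_\ell\pmod 4$. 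So every congruence that held for $G$ with respect to $H$ still holds for $G\vee G$ with respect to $\widehat H$, on both blocks of columns, giving PST from $j$ to $k$ at $\pi/2$.

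The main obstacle I anticipate is purely organizational: getting the $2\times 2$ block diagonalization of $L(G\vee G)$ exactly right, in particular correctly locating which eigenvalues attach to the $\mathbf 1$-eigenvector and its partner, and being careful that the normalization hypothesis on $H$ (first column $\mathbf 1$) is what makes $H^T J H$ collapse to a single nonzero diagonal entry. The arithmetic facts that make the mod-$4$ conditions transparent — namely $n = 2^m \equiv 0\pmod 4$ for $n\geq 4$, and $\lambda_\ell$ even so that $-\lambda_\ell\equiv\lambda_\ell\pmod 4$ — are the substantive observations, and everything else is bookkeeping once Theorem~\ref{thm:eigenPST} is in hand. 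The $n\geq 4$ hypothesis is exactly what is needed to ensure $4\mid n$; for $n=2$ the argument would break, which is consistent with the statement.
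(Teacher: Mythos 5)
Your argument is correct, and for the complement it is essentially the paper's argument: both reduce to the observation that the nonzero eigenvalues shift from $\lambda_\ell$ to $n-\lambda_\ell$, and that $n\equiv 0\pmod 4$ together with the evenness of the $\lambda_\ell$ (Theorem~\ref{thm:SS}) leaves the congruences of Theorem~\ref{thm:eigenPST} untouched. Where you genuinely diverge is in two places. First, the paper does not prove diagonalizability of $G^c$ and $G\vee G$ at all --- it cites Lemma~7 of \cite{ShaunSteve} --- whereas you derive it directly from $L(G^c)=nI-J-L(G)$ and an explicit $2\times 2$ block computation; this makes your write-up more self-contained at the cost of some bookkeeping. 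Second, for the join the paper's argument is a one-liner: $G\vee G=(G^c+G^c)^c$, the disjoint union of a PST graph with itself has PST between the same pair, and the complement step already established preserves PST. Your route instead diagonalizes $L(G\vee G)=\begin{bmatrix}L+nI&-J\\-J&L+nI\end{bmatrix}$ by $\widehat H$ and checks the mod-$4$ condition column by column; this is longer but has the side benefit of exhibiting the full spectrum of the join ($0$, $2n$, and $\lambda_\ell+n$ with multiplicity two for $\ell\ge 2$) and of verifying that $\widehat H$ is itself normalized, which is what licenses applying Theorem~\ref{thm:eigenPST} to it. Two small slips to fix, neither of which affects the conclusion: the order of a Hadamard matrix need not be a power of $2$ (the paper exhibits an order-$12$ example); all you need, and all that is true, is that $4\mid n$ for $n\ge 4$. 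And in your final congruence check you assign eigenvalue $\lambda_\ell$ to the symmetric columns with $\ell\ge 2$, whereas your own block computation (correctly) gives $\lambda_\ell+n$; since $n\equiv 0\pmod 4$ the residues agree, but the text should be made consistent.
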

\begin{proof}
Without loss of generality we can assume that $H$ is a normalized
Hadamard matrix. The result that $G^{c}$ and $G\vee G$ are
diagonalizable follows from Lemma~7 in~\cite{ShaunSteve}. If we
denote the eigenvalues of the Laplacian of $G$ by $\lambda_{1}=0,
\lambda_{2},\cdots ,\lambda_{n}$, then from {Theorem~\ref{thm:eigenPST}}
we know that for $\ell =1,\cdots ,n$, $\lambda_{\ell }\equiv 1-h_{j
\ell }h_{k\ell } \mod {4}$. Therefore the eigenvalues $0, n-\lambda
_{2},\allowbreak \ldots , n-\lambda_{n}$ of $G^{c}$ satisfy $(n-\lambda_{\ell })
\equiv -(1-h_{j\ell }h_{k\ell })\equiv 1-h_{j\ell }h_{k\ell } \mod
{4}$, since $1-h_{j\ell }h_{k\ell }$ is either $0$ or $2 \,
\operatorname{mod} 4$ and $n$ must be a multiple of $4$ in order for a
Hadamard of order $n$ to exist.

Again from {Theorem~\ref{thm:eigenPST}}, we then know that $G^{c}$ has PST
from vertex $j$ to $k$ at time~$\pi /2$. Thus
$G\vee G=(G^{c}+G^{c})^{c}$ also has PST from vertex $j$ to $k$ at time
$\pi /2$.
\end{proof}
Note that we can also prove that $G^{c}$ exhibits PST at $t_{0}=
\pi /2$ by noticing that if $\lambda $ is a nonzero Laplacian eigenvalue
for $G$, then $n-\lambda $ is a Laplacian eigenvalue for~$G^{c}$ with
the same eigenvector. As $n-\lambda \equiv \lambda \mod {4}$, the
conclusion now follows from {Theorem~\ref{thm:eigenPST}}.\looseness=1

\goodbreak

We now introduce a modification of $G_{1} \ltimes G_{2}$ that, much like
$G_{1} \ltimes G_{2}$, can be used to construct new graphs with PST from
old ones. Suppose that $G_{1}$ and $G_{2}$ are two weighted graphs of
order $n$, with Laplacians $L_{1} = D_{1}-A_{1}$ and $L_{2} = D_{2}-A
_{2}$, respectively. Then we define the \emph{merge} of $G_{1}$ and
$G_{2}$ with respect to the weights $w_{1}$ and ${{w_{2}}}$ to be the graph
 $G_1 \tensor[_{w_1}]{\odot}{_{w_2}} G_2$ with Laplacian
\begin{eqnarray*}
\begin{bmatrix}
w_{1}L_{1} + {{w_{2}}}D_{2} & -{{w_{2}}}A_{2}
\\ \noalign{\vspace{2pt}}
-{{w_{2}}}A_{2} & w_{1}L_{1} + {{w_{2}}}D_{2}
\end{bmatrix}
.
\end{eqnarray*}
In the case that $w_{1} = {{w_{2}}} = 1$, we denote the merge simply by
$G_{1} \odot G_{2}$, and it recovers $G_{1} \ltimes G_{2}$.

\begin{figure}[htb]
	\captionsetup{justification=raggedright,singlelinecheck=false}
	\begin{center}
		\begin{tikzpicture}[x=2.4cm, y=2.4cm, label distance=0cm]
		\tikzset{
			bracem/.style={
				decoration={brace, mirror},
				decorate
			},
			brace/.style={
				decoration={brace},
				decorate
			}
		}
		
		\vertex[fill] (w0) at (0,0) [label=210:$1$]{};
		\vertex[fill] (w1) at (0,1) [label=120:$2$]{};
		\vertex[fill] (w2) at (1,1) [label=60:$3$]{};
		\vertex[fill] (w3) at (1,0) [label=300:$4$]{};
		
		\vertex[fill] (v0) at (0.5,-1.8) [label=210:$1$]{};
		\vertex[fill] (v1) at (0.5,-0.8) [label=120:$2$]{};
		\vertex[fill] (v2) at (1.5,-0.8) [label=60:$3$]{};
		\vertex[fill] (v3) at (1.5,-1.8) [label=300:$4$]{};
		
		\path
		(v0) edge (v1)
		(v0) edge (v2)
		(v3) edge (v1)
		(v3) edge (v2)
		
		(w0) edge (w1)
		(w1) edge (w2)
		(w2) edge (w3)
		(w3) edge (w0)
		;
		
		\vertex[fill] (x0) at (3.5,0) [label=210:$1$]{};
		\vertex[fill] (x1) at (3.5,1) [label=120:$2$]{};
		\vertex[fill] (x2) at (4.5,1) [label=60:$3$]{};
		\vertex[fill] (x3) at (4.5,0) [label=355:$4$]{};
		
		\vertex[fill] (y0) at (4,-1.8) [label=210:$5$]{};
		\vertex[fill] (y1) at (4,-0.8) [label=175:$6$]{};
		\vertex[fill] (y2) at (5,-0.8) [label=60:$7$]{};
		\vertex[fill] (y3) at (5,-1.8) [label=300:$8$]{};
		
		\path
		(x0) edge (x1)
		(x1) edge (x2)
		(x2) edge (x3)
		(x3) edge (x0)
		
		(y0) edge (y1)
		(y1) edge (y2)
		(y2) edge (y3)
		(y3) edge (y0)
		
		(x0) edge[dashed,color=gray] (y1)
		(x0) edge[dashed,color=gray] (y2)
		(x1) edge[dashed,color=gray] (y0)
		(x1) edge[dashed,color=gray] (y3)
		(x2) edge[dashed,color=gray] (y0)
		(x2) edge[dashed,color=gray] (y3)
		(x3) edge[dashed,color=gray] (y1)
		(x3) edge[dashed,color=gray] (y2)
		;
		
		\draw[->] (2,-0.4) -- (3,-0.4);
		\draw [bracem] (1.9,-2) -- (1.9,1.2);
		\draw [brace] (3.1,-2) -- (3.1,1.2);
		\end{tikzpicture}
	\end{center}
	\caption{A depiction of two Hadamard diagonalizable graphs (left) and their merge (right). The new graph has two copies of the original vertex set, and there is now an edge $(j,k)$ and $(n+j,n+k)$ if and only if $G_1$ (top left) had edge $(j,k)$, and there is an edge $(j,n+k)$ if and only if $G_2$ (bottom left) had edge $(j,k)$.}\label{fig:new_graph_operation}
\end{figure}
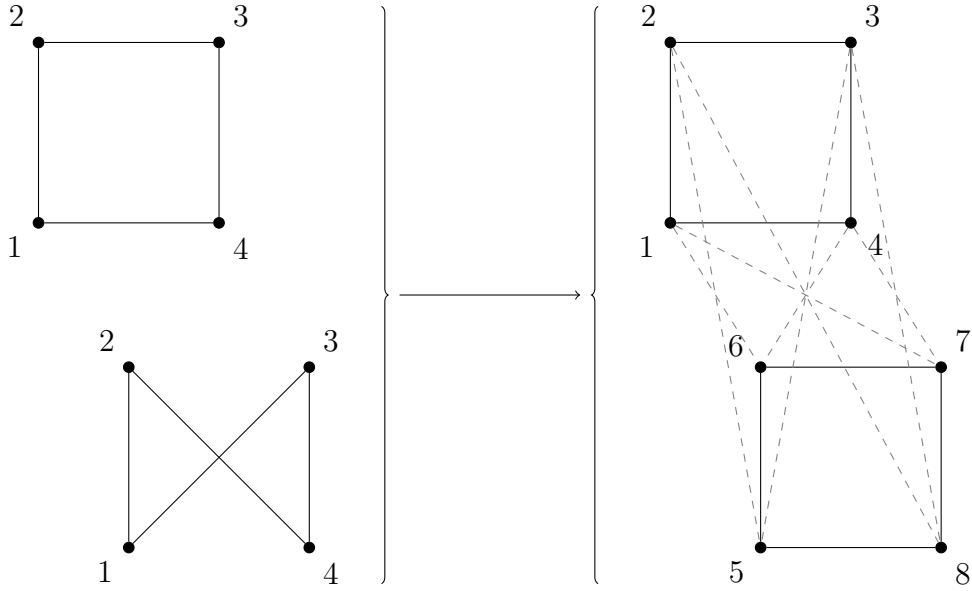

Observe that if $G_{1}$ and $G_{2}$ are both diagonalizable by the same
Hadamard matrix~$H$, then $G_{1} \tensor[_{w_1}]{\odot }{_{{{w_{2}}}}} G
_{2}$ is also Hadamard diagonalizable, by the matrix $
\begin{bmatrix}
H & H
\\
H & -H
\end{bmatrix}
$; this observation is what motivates our definition of the merge. While
this operation is a bit less intuitive than the other ones we saw, it
does have an interpretation in terms of the vertices and edges of the
original graphs. Specifically, if $G_{1}$ and $G_{2}$ each have vertices
labelled $\{1,\ldots ,n\}$, then $G_{1}
\tensor[_{w_{1}}]{\odot }{_{{{w_{2}}}}} G_{2}$ has twice as many vertices,
which we label $\{1,\ldots ,2n\}$. Furthermore, if $G_{1}$ has edge
$(j,k)$ with weight $w_{jk}$ then $G_{1}
\tensor[_{w_{1}}]{\odot }{_{{w_{2}}}} G_{2}$ has edges $(j,k)$ and
$(n+j,n+k)$, each with weight $w_{1}w_{jk}$. Similarly, if $G_{2}$ has
edge $(j,k)$ with weight $w_{jk}$ then $G_1\tensor[_{w_1}]{\odot}{_{w_2}} G_2$ has edge $(j,n+k)$ and
$(k, n+j)$ with weight ${{w_{2}}}w_{jk}$. See
{Fig.~\ref{fig:new_graph_operation}} for an example in the unweighted
case---the Laplacian matrices corresponding to $G_{1}$, $G_{2}$, and
$G_{1} \odot G_{2}$ in the example are, respectively,
\begin{eqnarray*}
L_{1}
& = &
\begin{bmatrix}
2 & -1 & 0 & -1
\\
-1 & 2 & -1 & 0
\\
0 & -1 & 2 & -1
\\
-1 & 0 & -1 & 2
\end{bmatrix}
, \quad L_{2} =
\begin{bmatrix}
2 & -1 & -1 & 0
\\
-1 & 2 & 0 & -1
\\
-1 & 0 & 2 & -1
\\
0 & -1 & -1 & 2
\end{bmatrix}
, \text{ and}
\\
L_{3}
& = &
\begin{bmatrix}
4 & -1 & 0 & -1 & 0 & -1 & -1 & 0
\\ \noalign{\vspace*{2pt}}
-1 & 4 & -1 & 0 & -1 & 0 & 0 & -1
\\ \noalign{\vspace*{2pt}}
0 & -1 & 4 & -1 & -1 & 0 & 0 & -1
\\ \noalign{\vspace*{2pt}}
-1 & 0 & -1 & 4 & 0 & -1 & -1 & 0
\\ \noalign{\vspace*{2pt}}
0 & -1 & -1 & 0 & 4 & -1 & 0 & -1
\\ \noalign{\vspace*{2pt}}
-1 & 0 & 0 & -1 & -1 & 4 & -1 & 0
\\ \noalign{\vspace*{2pt}}
-1 & 0 & 0 & -1 & 0 & -1 & 4 & -1
\\ \noalign{\vspace*{2pt}}
0 & -1 & -1 & 0 & -1 & 0 & -1 & 4
\end{bmatrix}
.
\end{eqnarray*}

We now describe an exact characterization of when the merge of two
integer-weighted graphs which are diagonalizable by the same Hadamard
matrix has PST at time $t_{0} = \pi /2$. This gives us a wide variety
of new graphs with PST; in particular, the merge operation produces
perfect state transfer graphs in a variety of scenarios. We note that
the result below can be proven by using techniques developed in
\cite{doublecover} for the adjacency matrix. However for completeness,
we give a separate proof that relies on Theorem~\ref{thm:eigenPST}.\looseness=1

\begin{theorem}
\label{thm:weighted_graph_merge}
Suppose $G_{1}$ and $G_{2}$ are integer-weighted graphs on $n$ vertices,
both of which are diagonalizable by the same Hadamard matrix $H$. Fix
$w_{1}, {{w_{2}}}\in \mathbb{Z}$ and let $L_{1}=d_{1}I-\nobreak A_{1}$, $L_{2}=d_{2}I-A_{2}$ be the Laplacian matrices for $G_{1}, G_{2}$, respectively. Then
$G_1\tensor[_{w_1}]{\odot}{_{w_2}} G_2$ has PST from vertex
$p$ to $q$, where $p<q$, at time $t_{0}=\pi /2$ if and only if one of
the following 8 conditions holds:
\begin{enumerate}
\item[1.] $p,q\in \{1,\ldots ,n\}$ and
\begin{enumerate}[(c)]
\item[(a)] $w_{1}$ is odd, ${{w_{2}}}$ is even, and $G_{1}$ has PST from $p$ to $q$ at
$t_{0}=\pi /2$, or
\item[(b)] $w_{1}$ and $d_{2}$ are even, ${{w_{2}}}$ is odd, and $G_{2}$ has PST from
$p$ to $q$ at $t_{0}=\pi /2$, or
\item[(c)] $w_{1}$ and ${{w_{2}}}$ are odd, $d_{2}$ is even, and the weighted graph
with Laplacian $L_{1}+L_{2}$ has PST from $p$ to $q$ at $t_{0}=\pi /2$;
\end{enumerate}

\item[2.] $p,q\in \{n+1,\ldots ,2n\}$ and
\begin{enumerate}[(c)]
\item[(a)] $w_{1}$ is odd, ${{w_{2}}}$ is even, and $G_{1}$ has PST from $p-n$ to
$q-n$ at $t_{0}=\pi /2$, or
\item[(b)] $w_{1}$ and $d_{2}$ are even, ${{w_{2}}}$ is odd, and $G_{2}$ has PST from
$p-n$ to $q-n$ at $t_{0}=\pi /2$,~or
\item[(c)] $w_{1}$ and ${{w_{2}}}$ are odd, $d_{2}$ is even, and the weighted graph
with Laplacian $L_{1}+L_{2}$ has PST from $p-n$ to $q-n$ at
$t_{0}=\pi /2$;
\end{enumerate}

\item[3.] $p\in \{1,\ldots ,n\}$, $q\in \{n+1,\ldots ,2n\}$ and
\begin{enumerate}[(b)]
\item[(a)] $w_{1}$ is even, ${{w_{2}}}$ and $d_{2}$ are odd, and $G_{2}$ has PST from
$p$ to $q-n$ at $t_{0}=\pi /2$, or
\item[(b)] $w_{1}$, ${{w_{2}}}$, and $d_{2}$ are all odd, and the weighted graph with
Laplacian matrix $L_{1}+L_{2}$ has PST from $p$ to $q-n$ at
$t_{0}=\pi /2$.
\end{enumerate}
\end{enumerate}
\end{theorem}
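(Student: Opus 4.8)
The strategy is to apply Theorem~\ref{thm:eigenPST} directly to the merged graph, whose Laplacian is $L_3 = \begin{bmatrix} w_1 L_1 + w_2 D_2 & -w_2 A_2 \\ -w_2 A_2 & w_1 L_1 + w_2 D_2 \end{bmatrix}$ and which is diagonalized by the Hadamard $\widetilde H = \begin{bmatrix} H & H \\ H & -H \end{bmatrix}$. First I would compute the eigenvalues of $L_3$ in terms of those of $L_1$ and $L_2$. Since $G_1, G_2$ are both diagonalized by $H$, writing $L_1 H = H \Lambda_1$ and $L_2 H = H \Lambda_2$ with $\Lambda_1 = \mathrm{diag}(\lambda_1^{(1)},\dots,\lambda_n^{(1)})$ and $\Lambda_2 = \mathrm{diag}(\lambda_1^{(2)},\dots,\lambda_n^{(2)})$, and using $L_2 = d_2 I - A_2$ so that $A_2 H = H(d_2 I - \Lambda_2)$, a block computation shows that the $2n$ columns of $\widetilde H$ are eigenvectors of $L_3$: the first $n$ columns $\begin{bmatrix} H\mathbf e_\ell \\ H\mathbf e_\ell \end{bmatrix}$ give eigenvalue $w_1 \lambda_\ell^{(1)} + w_2 d_2 - w_2(d_2 - \lambda_\ell^{(2)}) = w_1 \lambda_\ell^{(1)} + w_2 \lambda_\ell^{(2)}$, and the last $n$ columns $\begin{bmatrix} H\mathbf e_\ell \\ -H\mathbf e_\ell \end{bmatrix}$ give eigenvalue $w_1 \lambda_\ell^{(1)} + w_2 d_2 + w_2(d_2 - \lambda_\ell^{(2)}) = w_1 \lambda_\ell^{(1)} + 2 w_2 d_2 - w_2 \lambda_\ell^{(2)}$. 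I should double-check the sign/normalization conventions so that the $\ell$-th eigenvalue of $L_3$ is correctly paired with the $\ell$-th column of $\widetilde H$ (and similarly $\ell+n$).

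Next I would invoke Theorem~\ref{thm:eigenPST}: $G_1 \tensor[_{w_1}]{\odot}{_{w_2}} G_2$ has PST from $p$ to $q$ at $\pi/2$ iff every eigenvalue $\mu$ of $L_3$ satisfies $\mu \equiv 1 - \widetilde h_{p m}\widetilde h_{q m} \pmod 4$, where $\widetilde h$ denotes the entries of $\widetilde H$. The entries of $\widetilde H$ are easy to write down in terms of those of $H$: $\widetilde h_{p,m} = h_{p,m}$ if $p \le n, m \le n$; $= h_{p,m-n}$ if $p\le n, m>n$; $= h_{p-n,m}$ if $p>n, m\le n$; and $= -h_{p-n, m-n}$ if $p>n, m>n$. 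The proof then splits into the three cases for the location of $p,q$ (both in the first block, both in the second block, or split), and within each case the $2n$ congruences split into those indexed by the first $n$ columns of $\widetilde H$ and those indexed by the last $n$. Using that all $\lambda_\ell^{(i)}$ are even integers (Theorem~\ref{thm:SS}) and reducing $w_1\lambda_\ell^{(1)} + w_2\lambda_\ell^{(2)}$ and $w_1\lambda_\ell^{(1)} + 2w_2 d_2 - w_2\lambda_\ell^{(2)}$ modulo $4$ according to the parities of $w_1, w_2, d_2$, each group of $n$ congruences becomes exactly one of the three mod-$4$ PST conditions (via Theorem~\ref{thm:eigenPST} applied to $G_1$, to $G_2$, or to the graph with Laplacian $L_1 + L_2$), and one reads off precisely which parity combinations are compatible. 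For instance, in Case 1 ($p,q \le n$), the congruences from the first $n$ columns read $w_1\lambda_\ell^{(1)} + w_2\lambda_\ell^{(2)} \equiv 1 - h_{p\ell}h_{q\ell} \pmod 4$ while those from the last $n$ read $w_1\lambda_\ell^{(1)} + 2w_2 d_2 - w_2\lambda_\ell^{(2)} \equiv 1 - h_{p\ell}h_{q\ell} \pmod 4$; subtracting, $2w_2\lambda_\ell^{(2)} \equiv 2w_2 d_2 \pmod 4$, i.e. $w_2(\lambda_\ell^{(2)} - d_2) \equiv 0 \pmod 2$, which together with consistency of the two systems forces one of the sub-cases (a), (b), (c). I would carry out the analogous bookkeeping for Cases 2 and 3.

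The main obstacle is purely combinatorial rather than conceptual: keeping the large case analysis organized and making sure that (i) the ``only if'' direction genuinely excludes every parity combination not listed — e.g. showing that $w_1$ even and $w_2$ even is impossible because then all eigenvalues of $L_3$ are $\equiv 0 \pmod 4$, forcing $h_{p\ell}h_{q\ell} = 1$ for all $\ell$, hence $p = q$, contradicting $p \ne q$ — and (ii) that within each surviving combination the resulting $n$ congruences are \emph{equivalent} to (not merely implied by) the stated PST condition for the smaller graph. A clean way to handle the ``split'' Case 3 is to note that when $p \le n < q$ the products $\widetilde h_{p\ell}\widetilde h_{q\ell}$ for $\ell \le n$ equal $h_{p\ell}h_{q-n,\ell}$ while for $\ell > n$ they equal $-h_{p,\ell-n}h_{q-n,\ell-n}$, so the two halves of the system involve the \emph{same} sign pattern up to a global flip; matching this against the two eigenvalue formulas (which differ by $2w_2 d_2 - 2w_2\lambda_\ell^{(2)}$) pins down that $w_1$ must be even and isolates sub-cases (a) and (b). I expect the whole argument to be a few pages of careful but routine modular arithmetic once the eigenvalue formulas and the entry table for $\widetilde H$ are in place.
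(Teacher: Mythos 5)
Your plan is essentially the paper's own proof: diagonalize the merged Laplacian by $\begin{bmatrix} H & H \\ H & -H\end{bmatrix}$, obtain the two families of eigenvalues $w_1\lambda_\ell^{(1)}+w_2\lambda_\ell^{(2)}$ and $w_1\lambda_\ell^{(1)}-w_2\lambda_\ell^{(2)}+2w_2d_2$, and feed them into Theorem~\ref{thm:eigenPST} with a parity analysis over the three placements of $p,q$. One slip worth correcting: in Case 3, comparing the two congruence families (whose eigenvalues differ by $2w_2d_2-2w_2\lambda_\ell^{(2)}\equiv 2w_2d_2 \pmod 4$ while the right-hand sides differ by $2h_{p\ell}h_{q-n,\ell}\equiv 2 \pmod 4$) forces $w_2d_2$ to be odd, not $w_1$ to be even; the parity of $w_1$ is free and is exactly what separates sub-cases 3(a) ($w_1$ even, reducing to PST for $G_2$) from 3(b) ($w_1$ odd, reducing to PST for the graph with Laplacian $L_1+L_2$).
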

\begin{proof}
Without loss of generality we can assume that $H$ is a normalized
Hadamard matrix. Denote the diagonal matrices of eigenvalues for
$L_{1}, L_{2}$ by $\Lambda_{1}, \Lambda_{2}$, respectively, so that
$L_{j}=\frac{1}{n}H\Lambda_{j} H^{T}$, $j=1, 2$. Then the Laplacian of
$G_{1}\tensor[_{w_{1}}]{\odot }{_{{{w_{2}}}}} G_{2}$ is $L_{3}=
\begin{bmatrix}
w_{1}L_{1}+{{w_{2}}}d_{2}I &-{{w_{2}}}A_{2}
\\
-{{w_{2}}}A_{2} & w_{1}L_{1}+{{w_{2}}}d_{2}I
\end{bmatrix}
$. Further,
\begin{equation*}
L_{3}=\frac{1}{2n}
\begin{bmatrix}
H & H
\\
H &-H
\end{bmatrix}
\begin{bmatrix}
w_{1}\Lambda_{1}+{{w_{2}}}\Lambda_{2} &0
\\
0 & w_{1}\Lambda_{1}-{{w_{2}}}\Lambda_{2}+2{{w_{2}}}d_{2}I
\end{bmatrix}
\begin{bmatrix}
H & H
\\
H &-H
\end{bmatrix}
^{T}.
\end{equation*}
Denote the eigenvalues of $L_{1}, L_{2}$ by $\lambda_{\ell }^{(1)},
\lambda_{\ell }^{(2)}$, $\ell =1,\cdots , n$, respectively.
\begin{enumerate}
\item[1.] Suppose that $p,q\in \{1,\cdots ,n\}$ and that the graph with Laplacian $L_{3}$ has PST from $p$ to
$q$. Then for each $\ell =1,\cdots ,n$, $w_{1}\lambda_{\ell }^{(1)}+w
_{2}\lambda_{\ell }^{(2)}\equiv (1-h_{p\ell }h_{q\ell })\mod {4}$ and
$w_{1}\lambda_{\ell }^{(1)}-{{w_{2}}}\lambda_{\ell }^{(2)}+2{{w_{2}}}d_{2}
\equiv (1-h_{p\ell }h_{q\ell })\mod {4}$. In particular, $2{{w_{2}}}d_{2}
\equiv 0\mod {4}$, i.e., ${{w_{2}}}d_{2}$ is even. Note that if
$w_{1}$ and ${{w_{2}}}$ are both even, then $h_{p\ell }h_{q\ell }=1$ for
$\ell =1,\cdots ,n$, which is impossible.

If $w_{1}$ is odd and ${{w_{2}}}$ is even, then $w_{1}\lambda_{\ell }^{(1)}+w
_{2}\lambda_{\ell }^{(2)}\equiv \lambda_{\ell }^{(1)}\mod {4}$, so that
$\lambda_{\ell }^{(1)}\equiv (1-h_{p\ell }h_{q\ell })\mod {4}$,
$\ell =1,\cdots ,n$. Hence $G_{1}$ has PST from $p$ to $q$. Similarly,
if $w_{1}$ is even and ${{w_{2}}}$ is odd, then necessarily $d_{2}$ is even,
and as above $G_{2}$ has PST from $p$ to~$q$.

If $w_{1}$ and ${{w_{2}}}$ are both odd, then necessarily $d_{2}$ is even.
Also $w_{1}\lambda_{\ell }^{(1)}+{{w_{2}}}\lambda_{\ell }^{(2)}\equiv
\lambda_{\ell }^{(1)}+\lambda_{\ell }^{(2)}\equiv (1-h_{p\ell }h_{q
\ell })\mod {4}$, $\ell =1,\cdots ,n$. We deduce that the graph with Laplacian $L_{1}+L_{2}$ has
PST from $p$ to $q$.
\item[2.] If $p,q\in \{n+1,\cdots ,2n\}$ and the graph with Laplacian  $L_{3}$ has PST from $p$ to $q$, the
conclusions (a), (b), and (c) follow analogously to Case~1 above.
\item[3.] Suppose that $p\in \{1,\cdots , n\}$, $q\in \{n+1, \cdots , 2n\}$ and that the graph with Laplacian  
$L_{3}$ has PST from $p$ to $q$. Set $\hat{q}=q-n$. Then for each
$\ell =1,\cdots , n$, we have
\begin{eqnarray}\label{eq:cond1}
w_1\lambda_\ell^{(1)}+w_2\lambda_\ell^{(2)}\equiv(1-h_{p\ell}h_{\hat{q}\ell})\mod{4} \text{, and}
\end{eqnarray}
\vspace{-1cm}
\begin{eqnarray}\label{eq:cond2}
w_1\lambda_\ell^{(1)}-w_2\lambda_\ell^{(2)}+2w_2d_2\equiv(1+h_{p\ell}h_{\hat{q}\ell})\mod{4}.
\end{eqnarray}
Summing equations {(\ref{eq:cond1})} and {(\ref{eq:cond2})}, we find that
$2w_{1}\lambda_{\ell }^{(1)}+2{{w_{2}}}d_{2}\equiv 2\mod {4}$ and hence
$2{{w_{2}}}d_{2}\equiv 2\mod {4}$ since all the eigenvalues of $L_{1}$ are
even integers, and therefore ${{w_{2}}}d_{2}$ must be odd, i.e.,
${{w_{2}}}$ is odd and $d_{2}$ is odd. We have the following two cases.

If $w_{1}$ is even, then {(\ref{eq:cond1})} simplifies to $\lambda_{
\ell }^{(2)}\equiv (1-h_{p\ell }h_{\hat{q}\ell })\mod {4}$,
$\ell =1,\cdots ,n$, so for even $w_{1}$, and odd ${{w_{2}}}$ and
$d_{2}$, $G_{2}$ has PST from $p$ to $\hat{q}$.

If $w_{1}$ is odd, then {(\ref{eq:cond1})} simplifies to $\lambda_{
\ell }^{(1)}+\lambda_{\ell }^{(2)}\equiv (1-h_{p\ell }h_{\hat{q}
\ell })\mod {4}$, $\ell =1,\cdots ,n$, which shows that the
integer-weighted graph with Laplacian $L_{1}+L_{2}$ has PST from~$p$ to
$\hat{q}$.
\end{enumerate}
The converses are straightforward.
\end{proof}
Note that when both $w_{1}$ and ${{w_{2}}}$ are even, the graph
$G_{1}\tensor[_{w_{1}}]{\odot }{_{{{w_{2}}}}} G_{2}$ does not have PST at
time $\pi /2$. However, it might have PST at some other time. To see
this, we decompose the two integer weights $w_{j}$ as $w_{j}=2^{r_{j}}.b
_{j}$ (for $j=1,2$), where $b_{j}$ are odd integers. Let $r=\min (r
_{1}, r_{2})$. Then the PST property of the graph with Laplacian
$\frac{1}{2^{r}}L_{3}$ at time $\pi /2$ can be determined according to
{Theorem~\ref{thm:weighted_graph_merge}}. In the case that PST occurs, the
graph $G_{1}\tensor[_{w_{1}}]{\odot }{_{{{w_{2}}}}} G_{2}$ would then have
PST at time $\pi /2^{r+1}$. Also note that {Theorem~\ref{thm:weighted_graph_merge}} is true for any graphs whose Laplacian
eigenvalues are all even integers (including non integer-weighted
graphs).

\begin{remark}\label{suffiPST}
Assume that $G_{1}$ and $G_{2}$ are two graphs on $2^{m}$
vertices for $m\geq 2$ and that they are diagonalizable by the same
Hadamard matrix. Suppose that $G_{1}$ has PST from vertex $p$ to vertex
$q$, and $G_{2}$ has all its eigenvalues being multiples of 4 and that
its degree $d_{2}$ is odd (for example, a disjoint union of
$2^{m-r}$ copies of $K_{2^{r}}$ for $2\leq r\leq m$). Then $G_{1}
\odot G_{2}$ has PST from $p$ to $q+2^{m}$ according to Case~3(b) in
{Theorem~\ref{thm:weighted_graph_merge}}. Similarly, $G_{2} \odot G_{1}$
has PST from vertex $p$ to $q$ if $d_{1}$ is even (Case~1(c)), and it
has PST from vertex $p$ to $q+2^{m}$ if $d_{1}$ is odd (Case~3(b)).
\end{remark}

The requirement that both graphs are diagonalizable by the same Hadamard
matrix is necessary for {Theorem~\ref{thm:weighted_graph_merge}} to hold.
As a concrete example, let $G_{1}$ be equal to $K_{8}$ with a
$K_{3}$ removed, $G_{2}$ be equal to the 3-cube, $w_{1}=2$ and
${{w_{2}}}=1$ (and $d_{2}=3$). Then $G_{1}
\tensor[_{w_{1}}]{\odot }{_{{{w_{2}}}}} G_{2}$ is equal to
\begin{eqnarray*}
\begin{bmatrix}
13 & 0 & 0 & -2 & -2 & -2 & -2 & -2 & 0 &-1 & -1 & 0 & -1 & 0 & 0 & 0
\\
0 & 13 & 0 & -2 & -2 & -2 & -2 & -2 & -1 & 0 & 0 & -1 & 0 & -1 & 0 & 0
\\
0 & 0 & 13 & -2 & -2 & -2 & -2 & -2 & -1 & 0 & 0 & -1 & 0 & 0 & -1 & 0
\\
-2 & -2 & -2& 17 & -2 & -2 & -2 & -2 & 0 & -1 & -1 & 0 & 0 & 0 & 0 & -1
\\
-2 & -2 & -2 & -2 & 17 & -2 & -2 & -2 & -1 & 0 & 0 &0 & 0 & -1 & -1 &
0
\\
-2 & -2 & -2 & -2 & -2 & 17 & -2 & -2 & 0 & -1 & 0 & 0 & -1 & 0 & 0 &
-1
\\
-2 & -2 & -2 & -2 & -2 & -2 & 17 & -2 & 0 & 0 & -1 & 0 & -1 & 0 & 0 &
-1
\\
-2 & -2 & -2 & -2 & -2 & -2 & -2 & 17 & 0 & 0 & 0 & -1 & 0 & -1 & -1 &
0
\\
0 & -1 & -1 & 0 & -1 & 0 & 0 & 0 & 13 & 0 & 0 & -2 & -2 & -2 & -2 & -2
\\
-1 & 0 & 0 & -1 & 0 & -1 & 0 & 0 & 0 & 13 & 0 & -2 & -2 & -2 & -2 & -2
\\
-1 & 0 & 0 & -1 & 0 & 0 & -1 & 0 & 0 & 0 & 13 & -2 & -2 & -2 & -2 & -2
\\
0 & -1 & -1 & 0 & 0 & 0 & 0 & -1 & -2 & -2 & -2 & 17 & -2 & -2 & -2 &
-2
\\
-1 & 0 & 0 & 0 & 0 & -1 & -1 & 0 & -2 & -2 & -2 & -2 & 17 & -2 & -2 &
-2
\\
0 & -1 & 0 & 0 & -1 & 0 & 0 & -1 & -2 & -2 & -2 & -2 & -2 & 17 & -2 &
-2
\\
0 & 0 & -1 & 0 & -1 & 0 & 0 & -1 & -2 & -2 & -2 & -2 & -2 & -2 & 17 &
-2
\\
0 & 0 & 0 & -1 & 0 & -1 & -1 & 0 & -2 & -2 & -2 & -2 & -2 & -2 & -2 &
17
\end{bmatrix}
.
\end{eqnarray*}
There is no PST at time $\pi /2$, though the parameters are set up so
that they satisfy~3(a) of {Theorem~\ref{thm:weighted_graph_merge}} (but
not the hypothesis of both Laplacians being diagonalized by the same
Hadamard). Thus, unlike a similar result
\cite[Theorem~5.2]{doublecover} for the ``$\ltimes $'' operation (which
uses the adjacency matrices), graphs whose Laplacian matrices aren't
diagonalizable by the same Hadamard matrix do not necessarily satisfy
the conclusion of the theorem. This may be due to the difference between
Laplacian dynamics and adjacency dynamics.\looseness=1

The following corollary to {Theorem~\ref{thm:weighted_graph_merge}}
provides an instance where the statement of the theorem simplifies
considerably, and generalizes the known fact that the unweighted
hypercube graph has PST.

\begin{corollary}
\label{cor:k_cube_weight}
Suppose $w_{1},{{w_{2}}},\ldots ,w_{n}$ are nonzero integers, exactly
$d$ of which are odd, and consider the weighted hypercube $C_{n} := (w
_{1} K_{2}) \square ({{w_{2}}} K_{2}) \square \cdots \square (w_{n} K_{2})$.
For each vertex $u$ of $C_{n}$, there is a vertex $v$ at distance
$d$ from $u$ such that there is perfect state transfer in $C_{n}$ from
$u$ to $v$ at time $t_{0} = \pi /2$.
\end{corollary}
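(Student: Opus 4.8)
The plan is to prove Corollary~\ref{cor:k_cube_weight} by induction on $n$, using the merge operation and Theorem~\ref{thm:weighted_graph_merge} at each step, since the weighted hypercube $C_n$ can be built one factor at a time. First I would observe that each single-edge factor $w_i K_2$ has Laplacian $w_i\begin{bmatrix}1&-1\\-1&1\end{bmatrix}$, which is diagonalized by $H_1$, has eigenvalues $0$ and $2w_i$, degree $w_i$, and trivially has PST from vertex $1$ to vertex $2$ at time $\pi/2$ when $w_i$ is odd (the nonzero eigenvalue $2w_i\equiv 2\bmod 4$). More importantly, I would identify the Cartesian product with a single $w_i K_2$ factor as a merge: if $G$ has Laplacian $L$ and degree $d$, then $G \,\square\, (w_i K_2)$ has Laplacian $\begin{bmatrix}L+w_i I&-w_i I\\-w_i I&L+w_i I\end{bmatrix}$, which is exactly $G \tensor[_{1}]{\odot}{_{w_i}} G_2$ with $G_2 = w_i K_2$ restricted appropriately --- more precisely it is the merge of $G$ with the graph on $n$ vertices that is a perfect matching with all weights $w_i$, but here each factor only has two vertices so one must set up the bookkeeping so that at stage $k$ the ambient graph on $2^{k}$ vertices is merged with $w_{k+1}$ times a disjoint union of $2^{k-1}$ copies of $K_2$. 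That auxiliary graph $G_2$ is diagonalizable by the standard Hadamard, has all eigenvalues in $\{0, 2w_{k+1}\}$, and has odd degree $w_{k+1}$ exactly when $w_{k+1}$ is odd.

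The inductive step then splits according to the parity of the new weight $w_{k+1}$. If $w_{k+1}$ is even, I would apply Case~1(a) of Theorem~\ref{thm:weighted_graph_merge} (with the roles $w_1 = 1$ odd, $w_2 = w_{k+1}$ even): PST is inherited from the first copy, at the \emph{same} vertex pair, and the distance in the new hypercube equals the old distance $d$ (the even factor contributes no change, consistent with ``$d$ of the weights being odd''). If $w_{k+1}$ is odd, I would instead invoke Case~3(b): here $w_1 = 1$, $w_2 = w_{k+1}$, and $d_2 = w_{k+1}$ are all odd, so PST of the merge from $p$ to $q + 2^{k}$ holds provided the graph with Laplacian $L + L_2$ has PST from $p$ to $q$; since $L_2$ here is $w_{k+1}$ times a disjoint union of $K_2$'s all of whose eigenvalues are $\equiv 2w_{k+1}\equiv 2\bmod 4$, and Theorem~\ref{thm:eigenPST} only sees eigenvalues mod $4$, adding $L_2$ shifts every nonzero eigenvalue's residue by $2$ --- which is precisely what flipping one coordinate of the $(p,q)$ pair does, so $L+L_2$ has PST between the same two vertices as $L$ when the parity bookkeeping is arranged. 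Crossing a copy then moves the target to the ``other half,'' increasing the distance by exactly one, matching the increment in the count $d$ of odd weights.

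The main obstacle I anticipate is the careful bookkeeping of \emph{which} vertex $v$ the PST lands on and verifying that the graph-theoretic distance from $u$ to $v$ in $C_n$ is exactly $d$. This requires tracking how the merge's vertex relabelling interacts with the hypercube coordinate structure: a vertex of $C_n$ is naturally a binary string, flipping the $i$-th coordinate corresponds (for odd $w_i$) to the ``$q \mapsto q + 2^k$'' jump across the copies, and one must check that the net effect of all $n$ stages is to flip exactly the $d$ coordinates indexed by the odd weights, giving Hamming distance $d$, and that no shorter path exists (which follows because the even-weight coordinates genuinely contribute nothing and the odd-weight coordinates must each be traversed). A secondary subtlety is confirming that the Cartesian-product construction really coincides with an iterated merge of graphs diagonalizable by \emph{the same} Hadamard (the standard one), so that the hypothesis of Theorem~\ref{thm:weighted_graph_merge} is met at every stage --- this is where Remark~\ref{suffiPST} is the relevant template, and I would cite it to streamline the odd-weight case rather than re-deriving Case~3(b) from scratch.
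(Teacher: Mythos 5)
Your overall strategy is exactly the paper's: induct on $n$, realize the Cartesian product with one more $K_2$ factor as a merge, and apply Case~1(a) of Theorem~\ref{thm:weighted_graph_merge} when the new weight is even and Case~3(b) when it is odd, with the target vertex jumping to the other copy (distance $+1$) precisely in the odd case. However, your identification of the auxiliary graph $G_2$ is muddled in a way that matters. The displayed Laplacian you write, $\left[\begin{smallmatrix}L+w_iI&-w_iI\\-w_iI&L+w_iI\end{smallmatrix}\right]$, forces $A_2=I$, i.e.\ $G_2$ must be the graph in which \emph{every vertex has a self-loop} and no other edges (so $D_2=I$, $L_2=0$, $d_2=1$ odd) --- this is the paper's choice, and it is what makes the argument clean: the merge $C_n\tensor[_{1}]{\odot}{_{w_{n+1}}}G_2$ is then \emph{literally} $(w_{n+1}K_2)\,\square\,C_n$, and the Case~3(b) hypothesis ``$L_1+L_2$ has PST from $p$ to $q$'' collapses to ``$L_1$ has PST from $p$ to $q$,'' which is the induction hypothesis. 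Your ``more precisely\ldots a perfect matching'' correction goes in the wrong direction: for a disjoint union of $K_2$'s the adjacency matrix is a fixed-point-free permutation $P\neq I$, so the off-diagonal block of the merge is $-w_{n+1}P$ and the resulting graph is not the Cartesian product (only isomorphic to it if $P$ happens to commute with $L_1$), the target vertex becomes $m(q)+2^n$ rather than $q+2^n$, and your claim that adding that $L_2$ ``shifts every nonzero eigenvalue's residue by $2$'' is false --- it shifts only the columns lying in the $(-1)$-eigenspace of $P$, i.e.\ half of them. So as written the odd-weight step would not go through; replacing the matching by the self-loop graph (and noting $d_2=1$ is always odd, so Case~3(b) applies whenever $w_{n+1}$ is odd) repairs it and recovers the paper's proof. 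Your base case and the Hamming-distance bookkeeping are fine.
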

\begin{proof}
We prove the result by induction on $n$. For the base case, we simply
note that it is straightforward to verify that the weighted $1$-cube
$w_{1}K_{2}$ has perfect state transfer at time $t = \pi /2$ if and only
if $w_{1}$ is an odd integer.

For the inductive hypothesis, we use
{Theorem~\ref{thm:weighted_graph_merge}} with $G_{1} = C_{n}$ (which we
will assume has perfect state transfer at time $t = \pi /2$ from vertex
$j$ to $k$, which are a distance of $d$ apart) and $G_{2}$ is the graph
on the same number of vertices where every vertex has a self-loop (of
weight $1$) and no other edges (note that this graph has perfect state
transfer between any vertex and itself at any time). Then it is
straightforward to verify that the graph $G_{1}
\tensor[_{1}]{\odot }{_{w_{n+1}}} G_{2}$ is exactly the weighted
$(n+1)$-cube:
\[
G_{1}\tensor[_{1}]{\odot }{_{w_{n+1}}} G_{2} = (w_{n+1} K_{2}) \square
C_{n} = (w_{n+1} K_{2}) \square (w_{1} K_{2}) \square ({{w_{2}}} K_{2})
\square \cdots \square (w_{n} K_{2}).
\]
So condition~1(a) of {Theorem~\ref{thm:weighted_graph_merge}} tells us
that if $w_{n+1}$ is even then $G_{1}
\tensor[_{1}]{\odot }{_{w_{n+1}}} G_{2}$ has perfect state transfer at
time $t_{0} = \pi /2$ from vertex $j$ to $k$ (which still have a
distance of $d$ from each other). On the other hand, if $w_{n+1}$ is odd
then condition~3(b) of {Theorem~\ref{thm:weighted_graph_merge}} says that
$G_{1}\tensor[_{1}]{\odot }{_{w_{n+1}}} G_{2}$ has perfect state
transfer at time $t = \pi /2$ from vertex $j$ to $k+2^{n}$ (which have
a distance of $d+1$ from each other). By noting that the particular
{labelling} of the weights is irrelevant (i.e., permute the indices of the
weights so that $G_{1}\tensor[_{1}]{\odot }{_{w_{n+1}}}G_{2} = (w_{1}
K_{2}) \square ({{w_{2}}} K_{2}) \square \cdots \square (w_{n+1} K_{2}) =
C_{n+1}$), this completes the inductive step and the proof.
\end{proof}

\begin{example}
From Lemma~9 and Proposition~10 of \cite{ShaunSteve}, one
can conclude that there is no unweighted graph of order 12 that is
Hadamard diagonalizable and exhibits PST. However, it is easy to
construct weighted graphs of this type. Let $G_{1}$ be the graph whose
Laplacian is
\begin{eqnarray*}
L_{1}=\frac{1}{3}
\begin{bmatrix}
18 & 0 & -1 & -1 & -1 & -3 & -3 & -3 & -1 & -3 & -1 & -1
\\ \noalign{\vspace*{2pt}}
0 &18 & -1 & -1 & -1 & -3 & -3 & -3 & -1 & -3 & -1 & -1
\\ \noalign{\vspace*{2pt}}
-1 &-1 & 18& -2 & -2 & 0 & -2 & 0 &-2 & -2 & -4 & -2
\\ \noalign{\vspace*{2pt}}
-1&-1& -2 & 18 & -4 & 0 & 0 & -2 & -2 & -2 & -2 & -2
\\ \noalign{\vspace*{2pt}}
-1&-1& -2 & -4 & 18 & -2 & -2 & 0 & -2 & 0 & -2 & -2
\\ \noalign{\vspace*{2pt}}
-3&-3 & 0 & 0 & -2 & 18 & -2 & -2 & 0 & -2 & -2 & -2
\\ \noalign{\vspace*{2pt}}
-3& -3& -2 & 0 & -2 & -2 & 18 & -2 & -2 & -2 & 0& 0
\\ \noalign{\vspace*{2pt}}
-3&-3 & 0 & -2 & 0 & -2 & -2 & 18 & -2 & -2 & -2& 0
\\ \noalign{\vspace*{2pt}}
-1& -1 & -2 & -2 & -2 & 0 & -2 & -2 & 18 & 0 & -2 & -4
\\ \noalign{\vspace*{2pt}}
-3& -3 & -2 & -2 & 0 & -2 & -2 & -2 & 0 & 18 & 0 & -2
\\ \noalign{\vspace*{2pt}}
-1& -1& -4 & -2 & -2 & -2 & 0 & -2 & -2 & 0 & 18 & -2
\\ \noalign{\vspace*{2pt}}
-1& -1 & -2& -2 & -2 & -2& 0& 0&-4 & -2 & -2 & 18
\end{bmatrix}
\end{eqnarray*}
Then one can easily verify that $L_{1}$ is Hadamard diagonalizable by
the order 12 Hadamard
\begin{eqnarray*}
H=
\begin{bmatrix}
1 & 1 & 1 & 1 & 1 & 1 & 1 & 1 & 1 & 1 & 1 & 1
\\ \noalign{\vspace*{1pt}}
1 & -1 & 1 & -1 & 1 & 1 & 1 & -1 & -1 & -1 & 1 & -1
\\ \noalign{\vspace*{1pt}}
1 & -1 & -1 & 1 & -1 & 1 & 1 & 1 & -1 & -1 & -1 & 1
\\ \noalign{\vspace*{1pt}}
1 & 1 & -1 & -1 & 1 & -1 & 1 & 1& 1 & -1 & -1 & -1
\\ \noalign{\vspace*{1pt}}
1 & -1 & 1 & -1 & -1 & 1 & -1 & 1 & 1 & 1 & -1 & -1
\\ \noalign{\vspace*{1pt}}
1 & -1 & -1 & 1 & -1 & -1 & 1 & -1 & 1& 1 & 1 & -1
\\ \noalign{\vspace*{1pt}}
1 & -1 & -1 & -1 & 1 & -1 & -1 & 1 & -1 & 1 & 1 & 1
\\ \noalign{\vspace*{1pt}}
1 & 1 & -1 & -1 & -1 & 1 & -1 & -1 & 1 & -1 & 1 & 1
\\ \noalign{\vspace*{1pt}}
1 & 1 & 1 & -1 & -1 & -1 & 1 & -1& -1 & 1 & -1 & 1
\\ \noalign{\vspace*{1pt}}
1 & 1 & 1 & 1 & -1 & -1 & -1 & 1 & -1 & -1 & 1 & -1
\\ \noalign{\vspace*{1pt}}
1 & -1 & 1 & 1 & 1 & -1 & -1 & -1& 1 & -1 & -1 & 1
\\ \noalign{\vspace*{1pt}}
1 & 1 & -1& 1 & 1 & 1 & -1 & -1 & -1 & 1 & -1 & -1
\end{bmatrix}
\end{eqnarray*}
and that the $(1,2)$ entry of $e^{i(\pi /2)L_{1}}$ is 1, thus showing
that $L_{1}$ exhibits PST between vertices 1 and 2 at time $t_{0}=
\pi /2$. Let $G_{2}=K_{12}$, which we note is Hadamard diagonalizable
by $H$ but does not exhibit PST, and let $w_{1}=5$ and ${{w_{2}}}=2$. Direct
computation shows that all the eigenvalues of $L_{1}$ are even integers.
Hence {Theorem~\ref{thm:weighted_graph_merge}} still applies here, and
Case~1(a) of the theorem tells us that $G_{1}
\tensor[_{5}]{\odot }{_2} G_{2}$ has PST from vertex 1 to vertex~2 at
time $t_{0}=\pi /2$. One can indeed verify that $L_{3}=
\begin{bmatrix}
5L_{1} + 2D_{2} & -2A_{2}
\\
-2A_{2} & 5L_{1} + 2D_{2}
\end{bmatrix}
$, where $D_{2}=11I$ and $A_{2}=J-I$ (where $J$ is the all-ones matrix),
is Hadamard diagonalizable by $
\begin{bmatrix}
H& H
\\
H&-H
\end{bmatrix}
$\vspace{2pt} with eigenvalues (in the order determined by that diagonalization)
equal to $0$, $ 54$, $ 64$, $ 54$, $ 64$, $64$, $64$, $54$, $54$,
$54$, $44$, $54$, $44$, $50$, $60$, $50$, $ 60$, $ 60$, $60$, $ 50$,
$50$, $50$, $40$, and~$50$. Furthermore, by checking the $(1,2)$ entry
of $e^{i(\pi /2)L_{3}}$ we see that this graph exhibits PST between
vertices 1 and 2 at time $t_{0}=\pi /2$.
\end{example}

\begin{remark}\label{cubelikeconstruction}
For each $k\geq 3$ and each $d$ with $k+1\leq d\leq 2^{k}-2$, we
can construct a graph that is $d$-regular, unweighted, connected, and
non-bipartite on $2^{k}$ vertices, that is diagonalizable by the standard
Hadamard matrix and has PST at time $t_{0}=\pi /2$. This can be done
with cubelike graphs by using {Theorem~\ref{cubelikePST}}. To ensure the
cubelike graph is connected, we just need to make sure the connection
set contains a basis of $\mathbb{Z}_{2}^{k}$ when considered as a vector space. Let us take the standard
ordered basis: $\mathbf{e}_{1},\cdots ,\mathbf{e}_{k}$. Assume they form
the set $T$. For $d=k+1$, take the connection set $C=T\cup \{
\mathbf{e}_{1}+\mathbf{e}_{2}\}$. Then the induced subgraph on vertices
$0, \mathbf{e}_{1}, \mathbf{e}_{2}, \mathbf{e}_{1}+\mathbf{e}_{2}$ is
$K_{4}$; hence the corresponding cubelike graph is not bipartite. Also
note that the sum of the elements in $C$ is not 0 for $k\geq 3$. For
$d>k+1$, we always keep $C$ as a subset of the connection set $S$
($|S|=d, 0 \notin S$). If the sum of all elements in $S$ is not 0, then
the cubelike graph $G(S)$ is a desired graph. On the other hand, if the
sum of all elements in $S$ is 0, then we delete some element
$c_{0}$ from the set $S\setminus C$. Denote the set $S\setminus \{c
_{0}\}$ by $S_{0}$ and we know the sum of all its elements is
$c_{0}\neq 0$ (in $\mathbb{Z}_{2}^{k}$, every element has itself as its
inverse). Finally, we pick any element $c_{1}\in \mathbb{Z}_{2}^{k}
\setminus (S \cup \{0\})$ (this set has cardinality $2^{k}-d-1>0$) and
form a new set $S_{1}=S_{0}\cup \{c_{1}\}$. Then $S_{1}$ has cardinality~$d$
and the sum of all its element is $c=c_{0}+c_{1}\neq 0$. Hence there
is PST from $u$ to $u+c$ at time $\pi /2$ in the connected (since
$S_{1}$ is a generating set of the group $\mathbb{Z}_{2}^{k}$)
nonbipartite cubelike graph $G(S_{1})$.
\end{remark}

This remark can be stated as follows. As a means of highlighting the
utility of the merge operation, we present an alternate proof that
constructs such graphs using the merge.

\begin{theorem}
Suppose that $k\in \mathbb{N}$ with $k\geq 3$. For each $d\in
\mathbb{N}$ with $k+1\leq d\leq 2^{k}-2$, there is a connected,
unweighted, non-bipartite graph that is
\begin{enumerate}[(3)]
\item[(1)] diagonalizable by the standard Hadamard matrix of order $2^{k}$,
\item[(2)] $d$-regular, and
\item[(3)] has PST between distinct vertices at time $t_{0} = \pi /2$.
\end{enumerate}
\end{theorem}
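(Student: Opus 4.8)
The plan is to induct on $k$, building a $d$-regular example on $2^{k}$ vertices as a merge $G_{1}\odot G_{2}$ of two graphs on $2^{k-1}$ vertices, each diagonalizable by the standard Hadamard $H_{k-1}$. For such $G_{1},G_{2}$ the merge is diagonalizable by $\begin{bmatrix} H_{k-1} & H_{k-1}\\ H_{k-1} & -H_{k-1}\end{bmatrix}=H_{k}$, is unweighted of degree $\deg(G_{1})+\deg(G_{2})$, and is connected provided $G_{1}$ is connected and $G_{2}$ has an edge; moreover it contains an induced copy of $G_{1}$ on its first $2^{k-1}$ vertices, so it is non-bipartite whenever $G_{1}$ is. The workhorse is the following instance of {Theorem~\ref{thm:weighted_graph_merge}} with $w_{1}=w_{2}=1$ (compare {Remark~\ref{suffiPST}}): if $G_{2}$ is periodic with period $\pi/2$ (equivalently, every Laplacian eigenvalue of $G_{2}$ is divisible by $4$) and has odd degree, while $G_{1}$ has PST from $p$ to $q$ at time $\pi/2$, then $G_{1}\odot G_{2}$ has PST from $p$ to $q+2^{k-1}$ at time $\pi/2$; here when $p=q$ we read ``$G_{1}$ has PST from $p$ to $q$'' as ``$G_{1}$ is periodic with period $\pi/2$''. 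This is Case~3(b) of {Theorem~\ref{thm:weighted_graph_merge}}, using that $L_{1}+L_{2}$ is congruent to $L_{1}$ modulo $4$ on every eigenvalue.

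A second tool for non-bipartiteness is a triangle test: if $G_{2}$ has a triangle $\{a,b,c\}$ and $G_{1}$ contains one of its edges, say $\{a,b\}$, then $(a,0),(b,0),(c,1)$ is a triangle of $G_{1}\odot G_{2}$ (one $G_{1}$-edge and two cross-edges). For the base case $k=3$ we exhibit examples on $8$ vertices of degrees $4,5,6$: $K_{4}\odot S$ for $d=4$, where $S$ is the self-loop-only graph on $4$ vertices and $K_{4}$ is periodic with period $\pi/2$ by {Theorem~\ref{cubelikePST}} (so $K_{4}\odot S=K_{4}\,\square\,K_{2}$ has PST from $p$ to $p+4$); $C_{4}\odot K_{4}$ for $d=5$, where $C_{4}=Q_{2}$ has PST; and the cocktail-party graph $K_{4}\odot K_{4}$ for $d=6$. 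Each is connected, and non-bipartiteness follows either because $G_{1}$ is non-bipartite ($d=4,6$) or by the triangle test using a triangle of $K_{4}$ ($d=5$).

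For the inductive step fix $k\ge 4$ and $d\in\{k+1,\dots,2^{k}-2\}$, and assume the theorem for $k-1$; thus for each $e\in\{k,\dots,2^{k-1}-2\}$ there is a connected non-bipartite $H_{k-1}$-diagonalizable $e$-regular graph on $2^{k-1}$ vertices with PST at $\pi/2$, and the hypercube $Q_{k-1}$ is a connected (bipartite) $H_{k-1}$-diagonalizable $(k-1)$-regular graph with PST (by {Corollary~\ref{cor:k_cube_weight}}). For the periodic ``$G_{2}$'' we use $S$ (degree $1$, $L=0$) and, for $2\le r\le k-1$, the cubelike graph $2^{k-1-r}K_{2^{r}}$, a disjoint union of copies of $K_{2^{r}}$ which is $H_{k-1}$-diagonalizable, has Laplacian spectrum $\{0,2^{r}\}$ (so is periodic with period $\pi/2$), and has odd degree $2^{r}-1$. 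The combinatorial heart is: every $d\in\{k+1,\dots,2^{k}-3\}$ can be written $d=e_{1}+e_{2}$ with $e_{2}\in\{1,3,7,\dots,2^{k-1}-1\}$ and $e_{1}\in\{k-1,\dots,2^{k-1}-2\}$; equivalently, the interval $[\,d-2^{k-1}+2,\ d-k+1\,]$ meets $\{1,3,7,\dots,2^{k-1}-1\}$, which holds because it contains $2^{k-1}-k$ integers while the widest gap in $\{1,3,7,\dots,2^{k-1}-1\}$ has only $2^{k-2}-1$ integers and $2^{k-2}-1<2^{k-1}-k$ for $k\ge 4$. Given such a splitting, take $G_{2}=S$ if $e_{2}=1$ (forcing $e_{1}=d-1\ge k$), and $G_{2}=2^{k-1-r}K_{2^{r}}$ with $2^{r}=e_{2}+1$ otherwise; take $G_{1}$ to be the inductive example of degree $e_{1}$ if $e_{1}\ge k$, and $G_{1}=Q_{k-1}$ if $e_{1}=k-1$. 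Then $G_{1}\odot G_{2}$ is $d$-regular, $H_{k}$-diagonalizable, connected, and has PST at $\pi/2$ by the workhorse; it is non-bipartite because either $G_{1}$ is (case $e_{1}\ge k$), or, when $G_{1}=Q_{k-1}$, because choosing the connection set of $2^{k-1-r}K_{2^{r}}$ to be $\langle e_{1},\dots,e_{r}\rangle\setminus\{0\}$ makes $\{0,e_{1},e_{2}\}$ a triangle of $G_{2}$ meeting the $Q_{k-1}$-edge $\{0,e_{1}\}$. Finally $d=2^{k}-2$ is handled by the cocktail-party graph $K_{2^{k-1}}\odot K_{2^{k-1}}$: $K_{2^{k-1}}$ is $H_{k-1}$-diagonalizable, periodic with period $\pi/2$ (its connection set sums to $0$), and Case~3(b) of {Theorem~\ref{thm:weighted_graph_merge}} gives PST from $p$ to $p+2^{k-1}$; the result is $(2^{k}-2)$-regular, connected, non-bipartite, and $H_{k}$-diagonalizable.

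The main obstacle is the combinatorial bookkeeping of the inductive step: the only unweighted, regular, standard-Hadamard-diagonalizable graphs periodic with period $\pi/2$ have degrees of the special form $2^{r}-1$, so one must check that translating this thin set by the degrees of $G_{1}$ available from the induction (the hypercube degree $k-1$ together with the block $\{k,\dots,2^{k-1}-2\}$) still covers all of $\{k+1,\dots,2^{k}-2\}$ — with $2^{k}-2$ requiring the separate cocktail-party construction — while simultaneously keeping track of non-bipartiteness, which is delicate exactly when the induction forces $G_{1}$ to be the bipartite hypercube.
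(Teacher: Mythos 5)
Your proof is correct, and its engine is the same as the paper's: induct on $k$, and build the new graph as a merge of a PST graph with a disjoint union of cliques, invoking Case~3(b) of Theorem~\ref{thm:weighted_graph_merge} (equivalently Remark~\ref{suffiPST}) together with the eigenvalue reading of ``PST from $p$ to $p$'' as periodicity. The differences are organizational but worth recording. First, you place the PST graph in the first slot of the merge and the periodic clique-union in the second, whereas the paper forms $G_r \odot G_{k,d'}$ with the clique-union first; since the merge is not symmetric these produce different graphs, and your ordering buys a cheaper connectivity argument (each half induces a connected copy of $G_1$, and any edge of $G_2$ joins the halves) in place of the paper's nullity computation. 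Second, the paper covers the low end of the degree range with a separate $K_2\,\square\,G_{k,d-1}$ case and, for the one degree this leaves uncovered at $k=4$ (namely $d=10$), falls back on the join $G^c\vee G^c$ via Proposition~\ref{compljoin}; you instead fold the $K_2\,\square$ case into the merge framework as $e_2=1$ and admit the bipartite hypercube $Q_{k-1}$ as an extra degree-$(k-1)$ building block, restoring non-bipartiteness through the triangle $\{(0,0),(e_1,0),(e_2,1)\}$. This eliminates the join from the argument entirely, at the cost of the slightly more delicate bookkeeping you acknowledge (one must check that $e_1=k-1$ never pairs with $e_2=1$, which your range restriction $d\ge k+1$ guarantees). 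Your gap-counting verification that every $d\le 2^k-3$ admits a splitting is terser than the paper's explicit union of intervals~\eqref{eq:interval_unions} but is correct once one notes, as you implicitly do by treating $d=2^k-2$ separately, that the interval of admissible $e_2$ always meets $[1,2^{k-1}-1]$. Both approaches land on the complement of a perfect matching for $d=2^k-2$.
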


\begin{proof}
For every integer $k\geq 3$, it is easy to see that the complement of
a perfect matching (disjoint union of $2^{k-1}$ copies of $K_{2}$) is
a $(2^{k}-2)$-regular graph with the desired properties. So we just need
to prove the result for $k+1\leq d\leq 2^{k}-3$. We proceed by induction
on $k$. For $k=3$, it is straightforward to check that $(K_{2,2}
\square K_{2})^{c}, (K_{2,2}+K_{2,2})^{c}$ and $(K_{2}+K_{2}+K_{2}+K
_{2})^{c}$ are $4$-, $5$-, and $6$-regular graphs, respectively,  having the described
properties. Now suppose the result holds for some fixed $k\geq 3$; that
is, for each $d$ with $k+1\leq d\leq 2^{k}-2$, we have a $d$-regular
graph $G_{k,d}$ on $2^{k}$ vertices with the desired properties. To
construct desired graphs on $2^{k+1}$ vertices, we split into two cases
depending on the regularity $d$ of the graph that we are trying to
construct.
\begin{enumerate}[Case~2:]
\item[Case~1:] $(k+1)+1 \leq d \leq 2^{k}-1$. Let the Laplacian matrix
for $G_{k,d-1}$ be $L_{k,d-1}$. Consider the graph $ K_{2}\square G
_{k,d-1}$. This graph\vspace{1.4pt} has $2^{k+1}$ vertices and Laplacian $
\begin{bmatrix}
L_{k,d-1}+I & -I
\\
-I & L_{k,d-1}+I
\end{bmatrix}
$.\vspace{1.4pt} It is straightforward to see that this graph is $d$-regular,
connected, non-bipartite (since $G_{k,d-1}$ is) and satisfies~(1)
and~(3).\looseness=1
\item[Case~2:] $k+4\leq d\leq 2^{k+1}-3$. For each $2\leq r\leq k$, let
$G_{r}$ be the disjoint union of $2^{k-r}$ copies of $K_{2^{r}}$ and let
$L_{r}$ denote its Laplacian matrix. Note that each eigenvalue of
$L_{r}$ is congruent to 0 (mod 4); further, with a natural labelling of
the vertices, $L_{r}$ is diagonalizable by the standard Hadamard matrix
of order $2^{k}$.\looseness=-1

Fix $d^{\prime }$ with $k+1\leq d^{\prime }\leq 2^{k}-2$ and let
$A_{k,d^{\prime }}$ be the adjacency matrix of~$G_{k,d^{\prime }}$. Let
$G_{k,d^{\prime }}^{(r)}=G_{r}\odot G_{k,d^{\prime }}$ be the graph on
$2^{k+1}$ vertices\vspace{1.4pt} whose Laplacian matrix is $L_{k,d^{\prime }}^{(r)}=
\begin{bmatrix}
L_{r}+d^{\prime }I & -A_{k,d^{\prime }}
\\
-A_{k,d^{\prime }} & L_{r}+d^{\prime }I
\end{bmatrix}
$.\vspace{1.4pt} Then $G_{k,d^{\prime }}^{(r)}$ is not bipartite (it has $K_{4}$ as
an induced subgraph), and it is Hadamard diagonalizable by the standard
Hadamard matrix. By {Remark~\ref{suffiPST}}, it also has PST between a
pair of distinct vertices. It is regular of degree
$d^{\prime }+2^{r}-1$. Also in the notation of {Theorem~\ref{thm:weighted_graph_merge}}, $-\Lambda_{2}+2d^{\prime }I$ has
positive diagonal entries (since $G_{k,d^{\prime }}$ is not bipartite)
and so we deduce that the nullity of $L_{k,d^{\prime }}^{(r)}$ is 1 and
that $G_{k,d^{\prime }}^{(r)}$ is connected. Thus $G_{k,d^{\prime }}
^{(r)}$ is a graph on $2^{k+1}$ vertices, satisfying the desired
properties. We denote it as $G_{k+1,d^{\prime }+2^{r}-1}$.\looseness=1
\end{enumerate}

Thus we have produced the desired graphs whose degrees fall in the set
%
\begin{eqnarray}
\label{eq:interval_unions}[k+2, 2^{k}-1]\cup \bigcup_{r=2}^{k}[k+2^{r},2^{k}+2^{r}-3].
\end{eqnarray}

For $k=3$, this set covers the integers 5, 6, 7, 8, 9, 11, 12, 13. From
{Proposition~\ref{compljoin}} we know that if $G=(K_{2,2}+K_{2,2})^{c}$,
our 5-regular graph on 8 vertices satisfying the desired properties,
then the graph $G^{c}\vee G^{c}$ of order 16 is 10-regular and has the
desired properties. So the result is also true for graphs on
$2^{4}$ vertices.

For $k\geq 4$ we have $k+4\leq 2^{k-1}$. Then for any $r\leq k-1$, we
have $k+4+2^{r+1}\leq 2^{k-1}+2^{r+1}=2^{k-1}+2^{r}+2^{r}
\leq 2^{k-1}+2^{k-1}+2^{r}=2^{k}+2^{r}$, which in turn implies that
$k+2^{r+1}\leq 2^{k}+2^{r}-3$. It follows that the
set~{\eqref{eq:interval_unions}} contains all of the integers in
$[k+2,2^{k+1}-3]$.
\end{proof}

\begin{remark}
Note that for $2^{k}+1\leq d\leq 2^{k+1}-2$, we can also
construct a $d$-regular graph with the desired properties on
$2^{k+1}$ vertices using the join operation. From the induction
hypothesis, we have a graph $G_{k,d}$ with the desired properties for
$k+1\leq d\leq 2^{k}-2$. Now we use the result in
{Proposition~\ref{compljoin}}: if $G$ is a Hadamard diagonalizable graph
on $n\geq 4$ vertices and that $G$ has PST at time $\pi /2$, then its
complement also has PST at the same time, then we get a non-empty
$d$-regular Hadamard diagonalizable PST graph $G$ for each $d$ such that
$1\leq d\leq 2^{k}-2$. Then the graph $G\vee G$ has PST at time
$t_{0} = \pi /2$ and is diagonalizable by the standard Hadamard matrix,
whose regularity is $2^{k}+d$, ranging from $2^{k}+1$ to
$2^{k}+2^{k}-2$. Since $G$ is not empty, $G\vee G$ has cycles of
length~$3$, and therefore it is not bipartite.
\end{remark}

\section{PST for graphs with non-integer weights}\label{sec:noninteger}
We now consider some ways in which our results generalize to the case
of Hadamard diagonalizable graphs with non-integer edge weights. In the
case where all of the edge weights are rational, the idea is rather
straightforward.

\begin{proposition}
\label{prop:rational}
Suppose the graph $G_{1}$ with Laplacian $L_{1}$ is a rational-weighted
Hadamard diagonalizable graph, and let $\mathop{lcm}$ be the least
common multiple of the denominators of its edge weights, and
$\mathop{gcd}$ be the greatest common divisor of all the new integer
edge weights $\mathop{lcm}\cdot w(j,k)$. Then $G_{1}$ has PST at time
$t_{1} = \frac{\mathop{lcm}}{\mathop{gcd}}\cdot \pi /2$ if and only if
the integer-weighted Hadamard diagonalizable graph $G_{2}$ with
Laplacian $L_{2} = \frac{\mathop{lcm}}{\mathop{gcd}}L_{1}$ has PST at
time $t_{0} = \pi /2$ between the same pair of vertices.
\end{proposition}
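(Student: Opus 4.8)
The plan is to observe that $G_{1}$ at time $t_{1}$ and $G_{2}$ at time $\pi/2$ have literally the same time-evolution operator, so the statement reduces to a rescaling of time exactly in the spirit of Remark~\ref{rm:gcd}. First I would record that $G_{2}$ is a legitimate integer-weighted graph that is Hadamard diagonalizable by the \emph{same} Hadamard matrix as $G_{1}$. Write $m := \mathop{lcm}/\mathop{gcd}$. The edge weights of $G_{2}$ are $m\,w(j,k) = \big(\mathop{lcm}\cdot w(j,k)\big)/\mathop{gcd}$; each $\mathop{lcm}\cdot w(j,k)$ is an integer because $\mathop{lcm}$ is a common multiple of all the denominators, and $\mathop{gcd}$ divides all of these integers by definition, so the weights of $G_{2}$ are integers. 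Moreover, if $L_{1} = \tfrac{1}{n} H \Lambda H^{T}$ then $L_{2} = m L_{1} = \tfrac{1}{n} H (m\Lambda) H^{T}$, so $G_{2}$ is diagonalized by $H$ as well (and, as a sanity check, Theorem~\ref{thm:SS} then forces $m\Lambda$ to have even integer entries).

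The key step is the elementary identity $t_{1} L_{1} = \tfrac{\pi}{2}\, m L_{1} = \tfrac{\pi}{2} L_{2}$, which gives the matrix equality $e^{i t_{1} L_{1}} = e^{i(\pi/2) L_{2}}$. Consequently, for any pair of distinct vertices $j, k$,
\[
\big|\mathbf{e}_{j}^{T}\, e^{i t_{1} L_{1}}\, \mathbf{e}_{k}\big|^{2} = \big|\mathbf{e}_{j}^{T}\, e^{i(\pi/2) L_{2}}\, \mathbf{e}_{k}\big|^{2},
\]
so the left-hand side equals $1$ if and only if the right-hand side does. By the definition of PST, this says precisely that $G_{1}$ has PST between $j$ and $k$ at time $t_{1}$ if and only if $G_{2}$ has PST between $j$ and $k$ at time $\pi/2$, which is the claimed equivalence between the same pair of vertices.

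I do not expect a genuine obstacle here: the content is a scalar reparametrization of the evolution, and the exponent identity is immediate. The only item requiring a little care is the opening bookkeeping — verifying that the scaled Laplacian $L_{2} = m L_{1}$ really does have integer edge weights and is diagonalized by the same normalized Hadamard — but this is routine given how $\mathop{lcm}$ and $\mathop{gcd}$ were chosen, and it is not even needed for the equivalence of the PST statements themselves (only to place $G_{2}$ within the integer-weighted framework of the earlier results).
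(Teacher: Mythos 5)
Your proof is correct and is essentially identical to the paper's: both rest on the identity $t_{1}L_{1}=\tfrac{\pi}{2}L_{2}$, so that $e^{it_{1}L_{1}}=e^{i(\pi/2)L_{2}}$ and the two PST conditions coincide. The extra bookkeeping you include (integrality of the weights of $G_{2}$ and diagonalizability by the same Hadamard) is a harmless and reasonable addition that the paper leaves implicit.
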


\begin{proof}
The result follows simply from noticing that for each $j$ and $k$ we
have
%
\begin{eqnarray}
\label{prf:rational}|\mathbf{e}_{j}^{T} e^{it_{0}L_{2}} \mathbf{e}_{k} |^{2} = |
\mathbf{e}_{j}^{T} e^{it_{0}\frac{\mathop{lcm}}{\mathop{gcd}}L_{1}}
\mathbf{e}_{k} |^{2} = |\mathbf{e}_{j}^{T} e^{it_{1}L_{1}} \mathbf{e}
_{k} |^{2},
\end{eqnarray}
and $G_{1}$ has PST between vertex $j$ and vertex $k$ at time
$t_{1}$ if and only the rightmost quantity in~{\eqref{prf:rational}}
equals $1$, while $G_{2}$ has PST at time $t_{0}$ if and only if the
leftmost quantity in~{\eqref{prf:rational}} equals $1$.
\end{proof}

While we are not able to extend {Proposition~\ref{prop:rational}} to the
case of irrational weights directly---in general such a graph may not
exhibit PST at any time---it is true at least that the resulting graph
has pretty good state transfer when exactly one of the two weights in
$G_{1} \tensor[_{w_{1}}]{\odot }{_{{{w_{2}}}}} G_{2}$ is irrational. Before
giving the theorem, we recall the following result about approximating
an irrational real number with rational numbers.

\begin{theorem}[\cite{dioph3seq}]
Let $o$ denote the odd integers and $e$ denote the even integers. Then
for every real irrational number $w$, there are infinitely many
relatively prime numbers $u,v$ with $[u,v]$ in each of the three classes
$[o,e]$, $[e,o]$, and $[o,o]$, such that the inequality $|w-u/v|<1/v
^{2}$ holds.
\end{theorem}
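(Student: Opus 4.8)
The plan is to prove this by the theory of regular continued fractions, tracking the parities of the convergents and of their intermediate fractions. Write $w = [a_0; a_1, a_2, \dots]$ with convergents $p_n/q_n$, which satisfy $\gcd(p_n,q_n)=1$, the recurrences $p_n = a_n p_{n-1}+p_{n-2}$ and $q_n = a_n q_{n-1}+q_{n-2}$, the determinant identity $p_n q_{n-1}-p_{n-1}q_n = \pm 1$, and the estimate $|w - p_n/q_n| < 1/q_n^2$. Reducing modulo $2$ gives parity vectors $v_n := (p_n \bmod 2,\, q_n \bmod 2) \in \mathbb{F}_2^2$. Coprimality forces $v_n \ne (0,0)$, so $v_n$ lies in $\{(1,0),(0,1),(1,1)\}$, which are exactly the classes $[o,e]$, $[e,o]$, $[o,o]$ (the class $[e,e]$ being impossible for a coprime pair). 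The determinant identity shows $v_n$ and $v_{n-1}$ are linearly independent over $\mathbb{F}_2$, hence distinct; and since any two distinct nonzero vectors of $\mathbb{F}_2^2$ sum to the third, the parity combinatorics is governed by the reduced recurrence $v_n = (a_n \bmod 2)\,v_{n-1} + v_{n-2}$.

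Because the convergents already satisfy $|w - p_n/q_n| < 1/q_n^2$, they supply infinitely many good approximations; and since consecutive $v_n$ are distinct while there are only three classes, at least two classes occur infinitely often among them. If all three do, I am done. Otherwise exactly two classes, say $A$ and $B$, occur infinitely often and the third class $C$ occurs only finitely often; then the convergents eventually alternate between $A$ and $B$, forcing $v_n = v_{n-2}$ and hence (since $v_{n-1}\ne 0$) $a_n$ even for all large $n$. So the only case needing extra work is that of eventually even partial quotients, where I must manufacture infinitely many good approximations of the missing class $C$.

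For this I would use the intermediate fractions $\frac{p_{n-1}+k p_n}{q_{n-1}+k q_n}$ with $0\le k\le a_{n+1}$, whose numerator and denominator are again coprime (same determinant computation) and whose parity is $v_{n-1}+k v_n$; for odd $k$ this is the third class, which in the eventually-even regime is precisely $C$. Writing $\xi_{n+1}=[a_{n+1};a_{n+2},\dots]$ and substituting $w = \frac{p_n\xi_{n+1}+p_{n-1}}{q_n\xi_{n+1}+q_{n-1}}$, a short computation yields the exact error $\big|w - \frac{p_{n-1}+k p_n}{q_{n-1}+k q_n}\big| = \frac{|\xi_{n+1}-k|}{(q_n\xi_{n+1}+q_{n-1})(q_{n-1}+k q_n)}$, so the required bound $<1/(q_{n-1}+k q_n)^2$ reduces to an explicit inequality in $\xi_{n+1}$ and the ratio $q_n/q_{n-1}$. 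I plan to select a suitable odd $k$ for infinitely many $n$ by a dichotomy on $a_{n+2}$: when $a_{n+2}\le a_{n+1}+1$ the mediant $\frac{p_n+p_{n+1}}{q_n+q_{n+1}}$ meets the bound, while when $a_{n+2}\ge a_{n+1}$ the top intermediate fraction $\frac{p_{n-1}+(a_{n+1}-1)p_n}{q_{n-1}+(a_{n+1}-1)q_n}$ does (here the smallness of $1/\xi_{n+2}$ rescues the estimate; this fraction is legitimate since $a_{n+1}\ge 2$ in the even regime). In that regime both candidates have parity class $C$, each being a sum of two distinct classes from $\{A,B\}$, and since every large $n$ falls into at least one branch of the dichotomy, class $C$ appears infinitely often.

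The hard part will be precisely this last approximation estimate: no single fixed choice of $k$ (the mediant in particular) satisfies $|w-u/v|<1/v^2$ for all configurations of partial quotients—when $a_{n+1}$ is large relative to its neighbours the mediant is too far from $w$—so the case split on consecutive partial quotients seems unavoidable. Everything else (coprimality and nonvanishing of the parity vectors, the $\mathbb{F}_2^2$ bookkeeping, and the reduction to the eventually-even case) is routine; the technical heart is verifying the explicit error inequality in each branch and confirming that the two branches jointly cover all large $n$, so that the denominators $q_{n-1}+k q_n \to \infty$ furnish genuinely infinitely many distinct fractions of class $C$.
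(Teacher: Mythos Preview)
The paper does not prove this theorem at all: it is stated with a citation to Scott \cite{dioph3seq} and then used as a black box in the proof of Theorem~\ref{pgst_irrat}. There is therefore no ``paper's own proof'' to compare against.

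Your continued-fraction approach is the classical route to results of this type, and is essentially the method of Scott's original paper. The reduction is sound: the determinant identity forces consecutive parity vectors $v_{n-1},v_n$ to be distinct nonzero elements of $\mathbb{F}_2^2$, so the recurrence $v_n=(a_n\bmod 2)v_{n-1}+v_{n-2}$ makes the sequence cycle through all three classes unless the partial quotients are eventually all even, in which case the $v_n$ alternate between two classes and you must manufacture the third. Your use of intermediate fractions with odd coefficient $k$ to hit the missing class $C$ is exactly the right idea, and the dichotomy on the relative sizes of $a_{n+1}$ and $a_{n+2}$ is the standard device for squeezing the error below $1/v^2$ (no single fixed $k$ works uniformly, as you correctly note). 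The two branches $a_{n+2}\le a_{n+1}+1$ and $a_{n+2}\ge a_{n+1}$ overlap and jointly exhaust all cases, and in the eventually-even regime both candidate fractions have parity $v_{n-1}+v_n=C$ (since $a_{n+1}-1$ is odd). The only thing left implicit is the actual verification of the two error inequalities, which is a routine but slightly fiddly computation with $\xi_{n+1}$ and $q_n/q_{n-1}$; once that is written out, the argument is complete.
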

For the graph $G_{1} \tensor[_{w_{1}}]{\odot }{_{{{w_{2}}}}} G_{2}$, we say
it has parameters $[w_{1}, {{w_{2}}}, d_{2}]$, where as in {Theorem~\ref{thm:weighted_graph_merge}}, $d_{2}$ denotes the degree of
$G_{2}$. In particular, if $w_{1},{{w_{2}}}$, and $d_{2}$ are all odd
integers, we say the graph $G_{1} \tensor[_{w_{1}}]{\odot }{_{{w_{2}}}}
G_{2}$ has type $[o,o,o]$. We will denote the set of irrational numbers
by $\overline{\mathbb{Q}}$.

\begin{theorem}
\label{pgst_irrat}
Assume that $G_{1}$ and $G_{2}$ are integer-weighted graphs on $n$
vertices, both of which are diagonalizable by the same Hadamard matrix
$H$. Let $d_{2}$ be the degree of $G_{2}$. Let $L_{1}$ and $L_{2}$
denote the Laplacian matrices of $G_{1}$ and $G_{2}$, respectively.
Suppose that one of $w_{1}, {{w_{2}}}$ is rational and the other is
irrational, and suppose that $p,q \in \{1, \ldots , n\}$. Then the
weighted graph $G_{1}  \tensor[_{w_{1}}]{\odot }{_{{{w_{2}}}}} G_{2}$ has
PGST as stated in the following cases.
\begin{enumerate}[3.]
\item[1.] Suppose that $G_{1}$ has PST from $p$ to $q$ at time $\pi /2$. Then
$G_{1}  \tensor[_{w_{1}}]{\odot }{_{{w_{2}}}} G_{2}$ has PGST from $p$ to
$q$ and from $p+n$ to $q+n$.
\item[2.] Suppose that $G_{2}$ has PST from $p$ to $q$ at time $\pi /2$. If
$d_{2}$ is even, then $G_{1}  \tensor[_{w_{1}}]{\odot }{_{{{w_{2}}}}} G
_{2}$ has PGST from $p$ to $q$ and from $p+n$ to $q+n$. If $d_{2}$ is
odd, then $G_{1} \tensor[_{w_{1}}]{\odot }{_{w_2}} G_{2}$ has PGST
from $p$ to $q+n$ and from $q$ to $p+n$.
\item[3.] Suppose that the graph with Laplacian $L_{1}+L_{2}$ has PST from $p$ to $q$ at time $\pi /2$. If
$d_{2}$ is even, then $G_{1}  \tensor[_{w_{1}}]{\odot }{_{w_2}} G
_{2}$ has PGST from $p$ to $q$ and from $p+n$ to $q+n$. If $d_{2}$ is
odd, then $G_{1}  \tensor[_{w_{1}}]{\odot }{_{w_2}} G_{2}$ has PGST
from $p$ to $q+n$ and from $q$ to $p+n$.
\end{enumerate}
\end{theorem}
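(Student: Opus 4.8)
The plan is to reduce each case to the PST characterization in Theorem~\ref{thm:eigenPST} together with the rational approximation theorem quoted just above. Since one of $w_1,w_2$ is rational and the other irrational, write the rational one as $a/b$ in lowest terms; multiplying $L_3$ through by $b$ (which, by Remark~\ref{rm:gcd}, only rescales the PST time and does not affect whether PST/PGST occurs) we may assume one weight is an integer and the other is a fixed irrational $w$. As in the proof of Theorem~\ref{thm:weighted_graph_merge}, the Laplacian $L_3$ of $G_1\tensor[_{w_1}]{\odot}{_{w_2}}G_2$ is diagonalized by $\bigl[\begin{smallmatrix}H&H\\H&-H\end{smallmatrix}\bigr]$ with the $2n$ eigenvalues $w_1\lambda_\ell^{(1)}+w_2\lambda_\ell^{(2)}$ and $w_1\lambda_\ell^{(1)}-w_2\lambda_\ell^{(2)}+2w_2d_2$, $\ell=1,\dots,n$.

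First I would recall the standard eigenvalue criterion for PGST (analogous to Theorem~\ref{thm:eigenPST}): the graph with Laplacian $M=\frac1{2n}H'\Theta H'^{T}$ has PGST from $u$ to $v$ if and only if the first column of $H'$ is $\mathbf 1$ and there exist integers (Kronecker's theorem applied to the $\theta_\ell$ that are rationally independent together with the resonance conditions) allowing $e^{it\theta_\ell}\to h'_{u\ell}h'_{v\ell}$ simultaneously; concretely, whenever $h'_{u\ell}=h'_{u\ell'}$ forces $\theta_\ell\equiv\theta_{\ell'}$-type relations among the relevant eigenvalues, one can find a sequence $t\to\infty$ with $e^{it\theta_\ell}\to h'_{u\ell}h'_{v\ell}$ for all $\ell$. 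The key point for us is that in each of the three cases the eigenvalues split into an ``integer part'' and an irrational part that is a single irrational $w$ times an even integer, and we can choose a sequence of good rational approximants $u/v$ to $w$ with prescribed parity (this is exactly where the three-parity-class approximation theorem is used) so that along $t=v\pi/2$ the factor $e^{it\,w\,m}$ is forced close to $\pm1$ with the correct sign, while the integer-part contributions are exactly $\pm1$ by the same mod-$4$ bookkeeping as in Theorem~\ref{thm:weighted_graph_merge}. So for Case~1, where $G_1$ has PST, one takes $w_1$ (the weight attached to $L_1$) approximated along the class making $w_1$ odd and $w_2$ even so that the $w_2\lambda_\ell^{(2)}$ and $2w_2d_2$ terms vanish mod~$4$ in the limit and the $w_1\lambda_\ell^{(1)}$ terms reproduce the PST hypothesis; Cases~2 and~3 choose the approximating class according to the parity of $d_2$ exactly matching conditions~1(b)/3(a) and~1(c)/3(b) of Theorem~\ref{thm:weighted_graph_merge}.

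Concretely, the steps I would carry out are: (i) normalize so one weight is an integer and the other is a fixed irrational $w$; (ii) write down $L_3$ and its $2n$ eigenvalues in the $\bigl[\begin{smallmatrix}H&H\\H&-H\end{smallmatrix}\bigr]$-basis; (iii) state the PGST eigenvalue criterion and observe that it is met as soon as, for a sequence $t_m\to\infty$, every eigenvalue $\theta_\ell$ satisfies $e^{it_m\theta_\ell}\to h_{u\ell}h_{v\ell}$ with the correct target sign dictated by which vertex pair $(u,v)$ we aim for; (iv) in each case invoke the three-parity-class Diophantine theorem to produce relatively prime $u_m,v_m$ with $u_m/v_m\to w$, $|w-u_m/v_m|<1/v_m^2$, and the parities of $u_m,v_m$ in the class that makes the ``bad'' weight terms disappear mod~$4$; (v) set $t_m=v_m\pi/2$ and check that $t_m(\theta_\ell - $ integer-part$)=u_m\cdot(\text{even integer})\cdot\pi/2 + O(1/v_m)$, so $e^{it_m\theta_\ell}$ converges to the sign predicted by the corresponding case of Theorem~\ref{thm:weighted_graph_merge}; (vi) conclude $p(t_m)\to1$, i.e.\ PGST, for the stated vertex pairs, and note that the symmetric partner pair ($p+n$ to $q+n$, or $q$ to $p+n$) follows from the block symmetry of $L_3$.

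\textbf{Main obstacle.} The routine part is the mod-$4$ arithmetic, which is identical to Theorem~\ref{thm:weighted_graph_merge}; the delicate part is handling the irrational eigenvalues correctly. Writing $w=u_m/v_m + \epsilon_m$ with $|\epsilon_m|<1/v_m^2$, along $t_m=v_m\pi/2$ the irrational contribution to the phase is $v_m(u_m/v_m+\epsilon_m)(2\nu_\ell)\pi/2 = u_m\nu_\ell\pi + v_m\epsilon_m\nu_\ell\pi$, where $\nu_\ell$ is the relevant (even) integer multiple; the first term is $u_m\nu_\ell\pi$, an integer multiple of $\pi$ whose parity is controlled by $u_m$ and $\nu_\ell$, and the second term is $O(v_m\epsilon_m)=O(1/v_m)\to0$. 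I expect the main work is verifying uniformly over $\ell$ that the leading integer-multiple-of-$\pi$ contributions land on the right signs — i.e.\ that the parity class chosen for $(u_m,v_m)$ interacts correctly with \emph{all} the eigenvalue congruences simultaneously, and that one does not need any further rational-independence (Kronecker) input beyond the single irrational $w$ because the remaining spectrum is integral and handled exactly. A secondary point to be careful about is that the off-diagonal block eigenvalues carry the extra $+2w_2d_2$ term, so in the cases where $d_2$ is odd this shifts the target vertex from $q$ to $q+n$; tracking this shift through the limit is the reason Cases~2 and~3 bifurcate on the parity of $d_2$.
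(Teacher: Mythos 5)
Your proposal follows essentially the same route as the paper's proof: rescale so the rational weight is an odd integer, use the three-parity-class Diophantine approximation theorem to pick approximants $u/v$ of the irrational weight in whichever class puts the integer-weighted merge with parameters $[vw_1,u,d_2]$ into the relevant case of Theorem~\ref{thm:weighted_graph_merge}, and read out at $t=v\pi/2$. The only (minor) difference is in the last step: the paper writes $L_3=L_4+L_5$ and invokes the operator-norm fidelity perturbation bound of \cite{GKLPZ}, whereas you estimate the phase $e^{it\theta_\ell}$ of each eigenvalue directly; both yield an $O(1/v)$ error and the same conclusion.
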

Before proving this result, we note that it can alternatively be proved
via Kronecker's theorem using the techniques of \cite{BCGS}.
However, this would require proving that vertices $p$ and $q$ are
strongly cospectral, as well as some knowledge of eigenvalues and
eigenprojection matrices, so we instead give the following proof that
is somewhat more self-contained.\looseness=1
\begin{proof}
As in the proof of {Theorem~\ref{thm:weighted_graph_merge}}, without loss
of generality we assume that $H$ is a normalized Hadamard matrix. Assume
$w_{1}$ is rational and ${{w_{2}}}$ is irrational (case 1). We denote the
graph $G_{1}  \tensor[_{w_{1}}]{\odot }{_{w_2}} G_{2}$ as $G_{3}$, with
corresponding Laplacian $L_{3}$. It suffices to consider $w_{1}$ odd.
Indeed, if $w_{1}=\frac{a}b$, with $a$ and $b$ being relatively prime integers, assume  $a=2^{r}k$ where $r\in
\mathbb{N}$ and $k$ odd, then $e^{itL_{3}}=e^{it\frac{2^{r}}{b}(L_{3}b/2^{r})}$ so
that $L_{3}b/2^{r}$ is the Laplacian of the graph $G_{1}
\tensor[_{k}]{\odot }{_{{{w_{2}}}b/2^{r}}} G_{2}$ where $k$ is odd and
${{w_{2}}}b/2^{r}$ is irrational. Note that if $L_{3}b/2^{r}$ has PGST, then
so does $L_{3}$. Thus, for notational {simplicity}, we consider
$w_{1}$ odd.

We approach ${{w_{2}}}$ with fractions $u/v$ such that $|{{w_{2}}}-u/v|<1/v
^{2}$. For each such pair of $u,v$, we denote the graph $G_{1}
 \tensor[_{w_{1}}]{\odot }{_{u/v}} G_{2}$ as $G_{4}$, and the graph
$G_{1}
\tensor[_0]{\odot }{_{{{w_{2}}}-u/v}} G_{2}$ as $G_{5}$. In
particular, the Laplacian of $G_{3}$ is the sum of the Laplacian of
$G_{4}$ with the Laplacian of $G_{5}$. Denote the Laplacian matrices of
$G_{4}$ and $G_{5}$ as $L_{4}$ and $L_{5}$, respectively. Now consider
the integer-weighted graph $G'_{4}=G_{1}
\tensor[_{vw_{1}}]{\odot }{_u} G_{2}$, then its Laplacian is
$vL_{4}$ and has parameters $[vw_{1}, u, d_{2}]$.

There are now a number of cases to consider. If $[u,v]$ is of type
$[o,e]$ and $d_{2}$ is even, the graph $G'_{4}$ is of type $[e,o,e]$.
From {Theorem~\ref{thm:weighted_graph_merge}} we know, if $G_{2}$ has PST
from $p$ to $q$ at $\pi /2$, then $G'_{4}$ has PST at $\pi /2$ from
$p$ to $q$ and from $p+n$ to $q+n$ (Case~1(b), 2(b)). If $[u,v]$ is of
type $[o,e]$ and $d_{2}$ is odd, the graph $G'_{4}$ is of type
$[e,o,o]$. From {Theorem~\ref{thm:weighted_graph_merge}} we know that if
$G_{2}$ has PST at $\pi /2$ from $p$ to $q$ at $\pi /2$, then
$G'_{4}$ has PST at $\pi /2$ from $p$ to $q+n$ and from $q$ to $p+n$
(Case~3(a)).

If $[u,v]$ is of type $[e,o]$, then the graph $G'_{4}$ is of type
$[o,e,f]$, where $f$ denotes the parity of $d_{2}$. From {Theorem~\ref{thm:weighted_graph_merge}} we know that if $G_{1}$ has PST from
$p$ to $q$ at $\pi /2$, then $G'_{4}$ has PST at $\pi /2$ from $p$ to
$q$ and from $p+n$ to $q+n$ (Case~1(a), 2(a)).

If $[u,v]$ is of type $[o,o]$ and $d_{2}$ is even, the graph
$G'_{4}$ is of type $[o,o,e]$. From {Theorem~\ref{thm:weighted_graph_merge}} we know that if the graph with Laplacian
$L_{1}+L_{2}$ has PST from $p$ to $q$ at $\pi /2$, then $G'_{4}$ has PST
from $p$ to $q$ and from $p+n$ to $q+n$ (Case~1(c), 2(c)). If
$[u,v]$ is of type $[o,o]$ and $d_{2}$ is odd, the graph $G'_{4}$ is of
type $[o,o,o]$. From {Theorem~\ref{thm:weighted_graph_merge}} we know that
if the integer weighted graph with Laplacian $L_{1}+L_{2}$ has PST from
$p$ to $q$ at $\pi /2$, then $G'_{4}$ has PST from $p$ to $q+n$ and from
$q$ to $p+n$ (Case~3(b)).

Similarly we can get the results when $w_{1}$ is irrational and
${{w_{2}}}$ is rational (case 2). (We can assume ${{w_{2}}}$ is odd by way of
a similar argument to $w_{1}$ being odd in case 1.)

For all the above cases, $G_{4}$ has PST at time $t_{0}=v\pi /2$. Next,
we recall the following result from \cite[Theorem~4]{GKLPZ} (here
we take the absorbed constant factor $t_{0}$ out): Suppose PST occurs
for the graph with Laplacian matrix $L$ and assume that $\hat{L}=t
_{0}(L+L_{0})$ due to a small nonzero edge-weight perturbation
$L_{0}$. Then
%
\begin{eqnarray}
\label{exp-bound}
1 - |\mathbf{e}_{j}^{T} e^{it_{0}(L+L_{0})} \mathbf{e}_{k} |^{2}
&
\le & 2\|t_{0}L_{0}\| + \|t_{0}L_{0}\|^{2} - \|t_{0}L_{0}\|^{3}.
\end{eqnarray}
Now, $G_{4}$ is a graph with PST at time $t_{0}$, and $L_{3}=L_{4}+L
_{5}$. Then the fidelity of state transfer of $G_{3}$ between the
corresponding pair of vertices satisfies
\begin{eqnarray*}
|\mathbf{e}_{j}^{T} e^{it_{0}L_{3}} \mathbf{e}_{k} |^{2}
& \geq & 1-2
\|t_{0}L_{5}\| - \|t_{0}L_{5}\|^{2} + \|t_{0}L_{5}\|^{3}
\\
& \geq & 1-2cn\pi /(2v)-(cn\pi /(2v))^{2}+(cn\pi /(2v))^{3},
\end{eqnarray*}
where $c$ is the maximum edge weight in $G_{2}$. Since there are
infinitely such integers $v$, the expression on the right hand side in the above
inequality can be made as close to one as possible.
\end{proof}

It is known (see \cite{Kay11}) that if there is perfect state
transfer from vertex $j$ to vertex $k$ at time $t_{0}$, and perfect
state transfer from vertex $j$ to vertex $l$ at time $t_{1}$, then
necessarily $k=l$. The following example, which is a straightforward
consequence of {Theorem~\ref{pgst_irrat}}, shows that the situation with
respect to pretty good state transfer is markedly different. This is a
potentially important application to routing---the task of choosing
between several possible recipients of the state.

\begin{example}
Consider the unweighted graphs $G_{1}, G_{2}$ with the following
Laplacian matrices:
\begin{equation*}
L_{1}= \left[
\begin{array}{c@{\quad}c@{\quad}c@{\quad}c@{\quad}c@{\quad}c@{\quad}c@{\quad}c}
3 & -1 & -1 & 0 & -1 & 0 & 0 & 0
\\ \noalign{\vspace*{2pt}}
-1 & 3 & 0 & -1 & 0 & -1 & 0 & 0
\\ \noalign{\vspace*{2pt}}
-1 & 0 & 3 & -1 & 0 & 0 & -1& 0
\\ \noalign{\vspace*{2pt}}
0 & -1 & -1 & 3 & 0 & 0 & 0 & -1
\\ \noalign{\vspace*{2pt}}
-1 & 0 & 0 & 0 & 3 & -1 & -1 & 0
\\ \noalign{\vspace*{2pt}}
0 & -1 & 0 & 0 & -1 & 3 & 0 & -1
\\ \noalign{\vspace*{2pt}}
0 & 0 & -1 & 0 & -1 & 0 & 3 & -1
\\ \noalign{\vspace*{2pt}}
0 & 0 & 0 & -1 & 0 & -1 & -1 & 3
\end{array}
\right]
\end{equation*}
which has PST at time $\pi /2$ for the pairs $(1,8), (2,7), (3,6),
(4,5)$, and
\begin{equation*}
L_{2}= \left[
\begin{array}{c@{\quad}c@{\quad}c@{\quad}c@{\quad}c@{\quad}c@{\quad}c@{\quad}c}
3 & 0 & -1 & -1 & -1 & 0 & 0 & 0
\\ \noalign{\vspace*{2pt}}
0 & 3 & -1 & -1& 0 & -1 & 0 & 0
\\ \noalign{\vspace*{2pt}}
-1 & -1 &3 &0 &0 &0 &-1 &0
\\ \noalign{\vspace*{2pt}}
-1 &-1 &0 &3 &0 &0 &0 &-1
\\ \noalign{\vspace*{2pt}}
-1 &0 &0 &0 &3 &0 &-1 &-1
\\ \noalign{\vspace*{2pt}}
0 &-1 &0 &0 &0 &3 &-1 &-1
\\ \noalign{\vspace*{2pt}}
0 &0 &-1 &0 &-1 &-1 &3 &0
\\ \noalign{\vspace*{2pt}}
0 &0 &0 &-1 &-1 &-1 &0 &3
\end{array}
\right]
\end{equation*}
which has PST at time $\pi /2$ for the pairs $(1,6), (2,5), (3,8),
(4,7)$. It turns out that $L_{1}+L_{2}$ has PST at time $\pi /2$ between
the pairs $(1,3), (2,4), (5,7), (6,8)$. From the above collection of
cases, we find for example that if $w_{1}\in \mathbb{Q}$ and
${{w_{2}}} \in \overline{\mathbb{Q}}$ (or $w_{1}\in \overline{\mathbb{Q}}$
and ${{w_{2}}}\in \mathbb{Q}$), then $G_{1}
 \tensor[_{w_{1}}]{\odot }{_{{{w_{2}}}}} G_{2}$ has the intriguing property
that there is PGST between the pairs $(1,8), (1,11), (1,14)$ (among
others).
\end{example}

\section{Optimality}\label{optim}

\subsection{Timing errors}

In \cite{GKLPZ}, the authors analyse the sensitivity of the
probability of state transfer in the presence of small perturbations.
Bounds on the probability of state transfer with respect to timing
errors and with respect to manufacturing errors were given in the most
general setting where no information is known about the graph in
question. Specifically, suppose that under $XXX$ dynamics, a graph
$G$ on $n$ vertices has PST from vertex $1$ to vertex $2$ at time
$t_{0}$. Suppose further that there is a small perturbation so that the
readout time is instead $t_{0}+h$, where $|h| < \frac{\pi }{\lambda
_{n}}$, and where $\lambda_{n}$ is the largest eigenvalue of the
corresponding Laplacian. Decompose the Laplacian as $L=Q\Lambda Q^{T}$,
where $\Lambda =\diag (\lambda_{1}=0, \lambda_{2}, \dots , \lambda
_{n})$, with $0< \lambda_{2}\leq \dots \leq \lambda_{n}$, and $Q$ is an
orthogonal matrix of corresponding eigenvectors. If~$\mathbf{q}_{1}$ and
$\mathbf{q}_{2}$ are the first and second columns of $Q^{T}$,
respectively, then for some $\theta \in \mathbb{R}$ we have
$e^{i\theta }\mathbf{q}_{1}=e^{it_{0}\Lambda }\mathbf{q}_{2}$. Setting
$M = \diag (e^{ih\lambda_{1}}, \dots , e^{ih\lambda_{n}}) e^{i\theta
}$, it follows that
\begin{eqnarray*}
p(t_{0})-p(t_{0}+h)
&=&
1-|\mathbf{q}_{1}^{T}M\mathbf{q}_{1}|^{2} .
\end{eqnarray*}
In the special case that $G$ is Hadamard diagonalizable, we have
$Q=\frac{1}{\sqrt{n}}H$, where $H$ is a Hadamard matrix, so we can say
more. In that case,
%
\begin{eqnarray}
|\mathbf{q}_{1}^{T}M\mathbf{q}_{1}|
&=&\frac{1}{n}\left| \sum_{j=1}
^{n} e^{ih\lambda_{j}}\right| .
\end{eqnarray}
This suggests that, in order to find a lower bound for $|\mathbf{q}
_{1}^{T}M\mathbf{q}_{1}|$ (and thus an upper bound for $p(t_{0})-p(t
_{0}+h)$), the goal should be to make the numbers $e^{ih\lambda_{j}}$
as closely-spaced on the complex unit circle as possible. This agrees
with the known fact that minimizing the spectral spread has the
effecting of maximizing the bound for the fidelity of state transfer due
to timing errors \cite{Kay06}. Thus, this remark is not surprising
but rather confirms the known rule while at the same time providing a
more accurate bound on timing errors for Hadamard diagonalizable graphs.

\subsection{Manufacturing errors: sparsity of graphs with PST}

It is desirable to minimize the number of edges that need to be
engineered in a graph (so as to minimize manufacturing errors), so one
question of interest in the theory of perfect state transfer is how
sparse a graph with perfect state transfer can be. Among the sparsest
known graph with PST is the $k$-cube, which has $2^{k}$ vertices, each
with degree $k$. We now show that if we restrict our attention to
Hadamard diagonalizable unweighted graphs, then for $k \le 4$ the
$k$-cube is indeed the sparsest connected graph with PST.\looseness=1

\begin{theorem}
\label{thm:hypercube_sparse}
Let $G$ be a simple, connected, unweighted $r$-regular graph on $n$
vertices. Suppose further that $G$ is Hadamard diagonalizable, has
perfect state transfer at $\pi /2$, and that $r \leq 4$. Then
$n \leq 2^{r}$.
\end{theorem}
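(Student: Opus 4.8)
The plan is to extract from the hypotheses a small system of linear constraints on the multiplicities of the Laplacian eigenvalues of $G$, and then dispose of each value $r\in\{1,2,3,4\}$ separately. Since $G$ is a connected, unweighted, $r$-regular, Hadamard diagonalizable graph, Theorem~\ref{thm:SS} guarantees that every Laplacian eigenvalue is an even integer; writing $L=rI-A$ and using that the adjacency matrix $A$ has spectral radius $r$ shows that all these eigenvalues lie in $\{0,2,4,\dots,2r\}$; and connectedness makes $0$ a simple eigenvalue. For $i=0,1,\dots,r$, let $a_i$ be the multiplicity of the eigenvalue $2i$, so $a_0=1$ and $\sum_i a_i=n$. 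Because $G$ is simple, $L$ has diagonal entries all equal to $r$ and exactly $nr$ off-diagonal entries equal to $-1$, whence $\tr(L)=nr$ and $\tr(L^2)=nr^2+nr$; these say $\sum_i 2i\,a_i=nr$ and $\sum_i(2i)^2a_i=nr(r+1)$. Finally, PST at time $\pi/2$ (between some vertices $j\ne k$) together with Theorem~\ref{thm:eigenPST} forces exactly $n/2$ of the eigenvalues, counted with multiplicity, to be $\equiv 0\pmod 4$ and exactly $n/2$ to be $\equiv 2\pmod 4$; that is, $\sum_{i\text{ even}}a_i=\sum_{i\text{ odd}}a_i=n/2$.

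I would then run through the four cases. For $r=1$ the only connected simple $1$-regular graph is $K_2$, so $n=2=2^1$. For $r=2$ the parity split gives $a_1=n/2$ and $a_0+a_2=n/2$, hence $a_2=n/2-1$, and substituting into $\tr(L)=2n$ forces $n=4=2^2$. For $r=3$ the parity split and $\tr(L)=3n$ pin down $a_2=n/2-1$, $a_3=1$, $a_1=n/2-1$, after which $\tr(L^2)=12n$ gives $n=8=2^3$. For $r=4$ the analogous bookkeeping with $\tr(L)=4n$ and $\tr(L^2)=20n$ yields $a_3+a_4=n/4+1$ and then $a_4=n/8-1$; if $n\le 8$ we are done, whereas if $n>8$ then $a_4\ge 1$, so $2r=8$ is a Laplacian eigenvalue of the connected $r$-regular graph $G$, which is possible only when $G$ is bipartite and then only with multiplicity one, forcing $a_4=1$ and hence $n=16=2^4$.

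All of this is elementary once the constraints are set up; the only non-computational ingredient is the standard fact that a connected $r$-regular graph has $2r$ as a Laplacian eigenvalue exactly when it is bipartite, in which case that eigenvalue is simple (equivalently, $-r$ is a simple adjacency eigenvalue of a connected bipartite regular graph). The step I expect to need the most care is the case $r=4$: there the linear relations determine the multiplicities only as affine functions of $n$ and do not on their own bound $n$, so the bipartiteness argument is genuinely needed to exclude $n>16$, whereas for $r\le 3$ the trace identities already fix $n$ outright.
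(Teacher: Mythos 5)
Your proof is correct, and for $r\le 3$ it follows essentially the same route as the paper: Theorem~\ref{thm:SS} puts the Laplacian spectrum in $\{0,2,\dots,2r\}$, the traces of $L$ and $L^2$ give two linear relations on the multiplicities, connectedness gives $a_0=1$, and Theorem~\ref{thm:eigenPST} (via orthogonality of rows $j$ and $k$ of the normalized Hadamard) forces the even split of the spectrum between the residues $0$ and $2$ modulo $4$; these relations already pin down $n=2^r$ for $r\le 3$. The genuine divergence is at $r=4$, where the linear system determines all multiplicities as affine functions of $n$ (in particular $a_4=n/8-1$, matching the paper's $c_8$) but does not bound $n$. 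The paper closes this gap analytically, by computing $\mathrm{Tr}(L^3)$ and using $\mathrm{Tr}(A^3)\ge 0$ to obtain the inequality $\sum_j(2j)^3c_{2j}\le r^2n(r+3)$, which after substitution yields $n\le 16$. You instead close it structurally: since $a_4$ is a nonnegative integer, either $n=8$ or $a_4\ge 1$, and in the latter case $2r$ is a Laplacian eigenvalue of a connected $r$-regular graph, which by Perron--Frobenius happens only for bipartite graphs and then with multiplicity one, so $n/8-1=1$ and $n=16$. Both arguments are valid; yours avoids the cubic trace entirely and is arguably cleaner, though it shares with the paper's the feature that it does not obviously extend to $r=5$, where the linear relations no longer determine all multiplicities and the single extra constraint $a_5\le 1$ is not enough.
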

\begin{proof}
If $L$ is the Laplacian of $G$ then the result follows by computing some
quantities of the form $\mathrm{Tr}(L^{k})$ ($k \geq 0$ is an integer)
in two different ways. First, if $\lambda_{1},\ldots ,\lambda_{n}$ are
the eigenvalues of $L$ then $\mathrm{Tr}(L^{k}) = \sum_{j=1}^{n}
\lambda_{j}^{k}$, and we know by the Gershgorin circle theorem that
$0 \leq \lambda_{j} \leq 2r$ for each $j$. On the other hand,
$L = rI - A$, where $A$ is the adjacency matrix of $G$, so $
\mathrm{Tr}(L) = rn - \mathrm{Tr}(A)$ and $\mathrm{Tr}(L^{2}) = r^{2}n
- 2r\mathrm{Tr}(A) + \mathrm{Tr}(A^{2})$. Since $A$ is simple, we know
that $\mathrm{Tr}(A) = 0$ and it is straightforward to compute
$\mathrm{Tr}(A^{2}) = rn$. Thus we have the following system of
equations:
\begin{equation*}
\sum_{j=1}^{n} \lambda_{j} = rn \quad \text{and} \quad \sum_{j=1}^{n}
\lambda_{j}^{2} = rn(r + 1).
\end{equation*}

If we let $c_{\lambda }$ denote the number of eigenvalues of $L$ equal
to $\lambda $ (with the convention that if $\lambda $ is not an
eigenvalue, then $c_{\lambda }=0$), then these equations tell us that
%
\begin{eqnarray}
\label{eq:sparse}
\sum_{j=1}^{r} (2j)c_{2j} = rn \quad \text{and} \quad \sum_{j=1}^{r}(2j)^{2}c
_{2j} = rn(r + 1).
\end{eqnarray}
If we add in the equation $\sum_{j=1}^{r} c_{2j} = n-1$ (since one of
the eigenvalues equals $0$), then we have a system of $3$ linear
equations in the variables $n,c_{2},c_{4},\ldots ,c_{2r}$. If
$r \leq 2$ then it is straightforward to solve this system of equations
to get $n = 2^{r}$. If $r = 3$ then by adding the equation $c_{2} + c
_{6} = c_{4} + 1$ (since we know that half of $L$'s eigenvalues must
belong to each equivalence class mod $4$) we can similarly solve the
system of equations to get $n = 8 = 2^{r}$.

For the $r = 4$ case, we use the Equations~{\eqref{eq:sparse}} together
with the equation $c_{2} + c_{6} = c_{4} + c_{8} + 1$ (again, because
the eigenvalues are split evenly between the mod $4$ equivalence
classes). These equations together can be reduced to the system of
equations $c_{2} = 3n/8 -\nobreak  2$, $c_{4} = 3n/8$, $c_{6} = n/8 + 2$, and
$c_{8} = n/8 - 1$. To reduce this system further and get a unique
solution, we need to compute $\mathrm{Tr}(L^{3})$ in two different ways
(similar to at the start of the proof): $\mathrm{Tr}(L^{3}) = \sum
_{j=1}^{n} \lambda_{j}^{3} = r^{3}n - 3r^{2}\mathrm{Tr}(A) + 3r
\mathrm{Tr}(A^{2}) - \mathrm{Tr}(A^{3}) = r^{3}n + 3r^{2}n -
\mathrm{Tr}(A^{3})$. Since $\mathrm{Tr}(A^{3}) \ge 0$ we arrive at the
inequality $\sum_{j=1}^{n} \lambda_{j}^{3} \leq r^{2}n(r+3)$, which is
equivalent to $\sum_{j=1}^{r}(2j)^{3}c_{2j} \leq r^{2}n(r+3)$. Plugging
in $r = 4$ then gives
\begin{equation*}
8c_{2} + 64c_{4} + 216c_{6} + 512c_{8} \leq 112n.
\end{equation*}
It is then straightforward to substitute the equations $c_{2} = 3n/8 -
2$, $c_{4} = 3n/8$, $c_{6} = n/8 + 2$, and $c_{8} = n/8 - 1$ into this
inequality to get $n \leq 2^{r} = 16$, as desired.
\end{proof}

It seems reasonable to believe that {Theorem~\ref{thm:hypercube_sparse}}
could be generalized to arbitrary $r$, but the method of proof that we
used does not seem to generalize in a straightforward way, as there are
no more obvious equations or inequalities involving the $c_{2j}$'s that
we can use. For example, if we try to extend the proof of
{Theorem~\ref{thm:hypercube_sparse}} to the $r = 5$ case, we might try
computing $\mathrm{Tr}(L^{4})$ in two different ways. However, we then
end up with an equation involving both $-\mathrm{Tr}(A^{3})$ and
$+\mathrm{Tr}(A^{4})$, and it is not clear how to bound such a quantity.

\section{Acknowledgements}
The authors are grateful to an anonymous referee, whose constructive
comments resulted in improvements to the paper. N.J., S.K., and S.P.
were supported by {NSERC} Discovery Grant number RGPIN-2016-04003, RGPIN/6123-2014, and 1174582, respectively;
N.J.\ was also supported by
a  {Mount Allison Marjorie Young Bell Faculty Fund}; R.S.\ was supported
through a {NSERC} Undergraduate Student Research Award; X.Z.\ was
supported by the {University of Manitoba's Faculty of Science and Faculty of Graduate Studies}.





\end{document}